\newcommand{\be}{\begin{eqnarray} \begin{aligned}}
\newcommand{\ee}{\end{aligned} \end{eqnarray} }
\newcommand{\benn}{\begin{eqnarray*} \begin{aligned}}
\newcommand{\eenn}{\end{aligned} \end{eqnarray*} }
\newcommand{\ec}{\end{center}}
\newcommand{\half}{\frac{1}{2}}
\newcommand{\id}{\mathbb{I}}
\newcommand{\Tr}{\mathop{\mathrm{Tr}}\nolimits}
\newcommand{\bigoh}{\mathrm{O}}
\newtheorem{theorem}{Theorem}[section]
\newtheorem{examp}[theorem]{Example}
\newtheorem{lemma}[theorem]{Lemma}
\newtheorem{claim}[theorem]{Claim}
\newtheorem{definition}[theorem]{Definition}
\newtheorem{corollary}[theorem]{Corollary}
\newcommand{\mc}{\mathcal}
\def\Real{\mathbb{R}}
\def\Complex{\mathbb{C}}
\def\Integer{\mathbb{Z}}
\def\id{\mathbb{I}}
\def\01{\{0,1\}}
\newcommand{\eps}{\varepsilon}
\newcommand{\ket}[1]{|#1\rangle}
\newcommand{\bra}[1]{\langle#1|}
\newcommand{\outp}[2]{|#1\rangle\langle#2|}
\newcommand{\CM}{C}
\newcommand{\setS}{\mathcal{S}}
\newcommand{\setA}{\mathcal{A}}
\newcommand{\setAC}{\setA^{\times |C|}}
\newcommand{\setM}{\mathcal{O}}
\newcommand{\setE}{\mathcal{E}}
\newcommand{\spdBin}{\mathcal{S}_p^d}
\newcommand{\polylog}{\mbox{polylog}}
\newcommand{\assign}{:=}
\newcommand{\fat}{\mathrm{fat}}
\newenvironment{sdp}[2]{
\smallskip
\begin{center}
\begin{tabular}{ll}
#1 & #2\\
subject to
}
{
\end{tabular}
\end{center}
\smallskip
}
\newcommand{\cancel}[1]{}
\begin{document}

\title{{\sf Relaxed uncertainty relations and information processing}\\
\begin{center}
\large{Greg Ver Steeg\footnote{gregv@caltech.edu} \hspace{1cm} Stephanie Wehner\footnote{wehner@caltech.edu}\\
\emph{Institute for Quantum Information, California Institute of Technology, \\Pasadena CA 91125, USA}}
\end{center}\vspace{-0.4cm}
}
\date{\today}
\maketitle
\begin{abstract}
We consider a range of ``theories'' that violate the uncertainty relation for anti-commuting
observables derived in [JMP, 49, 062105 (2008)]. We first show that Tsirelson's bound
for the CHSH inequality can be derived from this uncertainty relation, and that relaxing
this relation allows for non-local correlations that are stronger than what can be obtained in
quantum mechanics. We continue to construct a hierarchy of related non-signaling theories, and show
that on one hand they admit superstrong random access encodings and exponential savings for a particular
communication problem, while on the other hand it becomes much harder in these theories to learn a 
state. We show that the existence of these effects stems from the absence of certain
constraints on the expectation values of \emph{commuting} measurements from our non-signaling theories
that are present in quantum theory.
\end{abstract}

\section{Introduction}
In any physical theory, we may consider measurements $M$ that when applied to a state $\rho$ result in some measurement outcome
$k$ with probability $P(k|M)$, depending on $\rho$.
A crucial element in characterizing the power of any physical theory lies in understanding
what probability distributions are indeed possible.
Quantum theory, for example, imposes strict limits on such distributions, which 
greatly affects our ability to perform information processing tasks~\cite{wim:nonlocal}.
One of these limitations is commonly known as an \emph{uncertainty relation}.
We may for example ask whether for some fixed choice of measurements $M_1$ and $M_2$ there even exists
any state such that both distributions can be arbitrarily well defined. That is, is it possible that there exist outcomes $k_1$ and
$k_2$ such that $P(k_1|M_1) = P(k_2|M_2) = 1$? Curiously, it turns out that in quantum theory there do indeed exist pairs of measurements
$M_1$ and $M_2$ for which this is impossible.
Another limitation is known as the \emph{strength of non-local correlations}, which are restrictions on 
the joint probability distributions we can obtain when performing measurements on spatially separated systems.
Classically, these limitations
are known as Bell inequalities, and the 
corresponding limitations in the quantum case are referred to as Tsirelson bounds. 

Since quantum mechanics imposes very stringent restrictions on the possible distributions~\cite{qmp}, we would much like to understand 
their extent and implications. To this end, it is instructive to remove some of these restrictions and investigate how our ability to perform
information processing tasks changes as a result. In this work, we will relax an uncertainty relation, which greatly affects our ability to solve
communication and coding tasks. We will also see that the different kinds of restrictions are very closely related and
show that for example Tsirelson's bound for the CHSH inequality is a consequence of the uncertainty relation of~\cite{ww:cliffordUR}.

\subsection{Previous work}
Previous work has focused on investigating one particular restriction imposed by quantum mechanics, namely its limits on non-local correlations.
Indeed, the existence of non-local correlations in quantum mechanics that are stronger than those allowed by local 
realism~\cite{bell:epr}, but yet strictly weaker than those consistent with the no-signaling principle~\cite{popescu:nonlocal} poses an enigma to the understanding of the foundations of quantum physics. 
What are the properties of quantum mechanics that disallow these stronger correlations~\cite{hardy:axioms}?
And, what possibilities would be opened by the existence of these correlations?
Much of the work exploring these questions has focused on the ``box paradigm'' that was initially inspired by the CHSH inequality~\cite{chsh:nonlocal}. This particular Bell inequality~\cite{bell:epr} can be cast into a form of a simple game between two players, 
Alice and Bob. When the game starts, Alice and Bob are presented with randomly and independently chosen questions $s \in \01$ and $t \in \01$ respectively. They win if and only if they manage to return 
answers $a \in \01$ and $b \in \01$ such that $s \cdot t = a \oplus  b$. Alice and Bob may thereby agree on any strategy before
the game starts, but may not communicate afterwards. Classically, that is in any model based on local realism, 
this strategy consists of shared randomness. It has been shown~\cite{chsh:nonlocal} that for any such strategy we have
$$
\gamma \assign \frac{1}{4} \sum_{s,t \in \01} \Pr[s \cdot t = a_s \oplus b_t] \leq \frac{3}{4},
$$
where $\Pr[s \cdot t = a_s \oplus b_t]$ is the probability that Alice and Bob return winning answers $a_s$ and $b_t$ when presented
with questions $s$ and $t$.
Quantumly, Alice and Bob may choose any shared quantum state together
with local measurements as part of their strategy. This allows them to violate the inequality above, but curiously
only up to a value 
$$
\gamma \leq \frac{1}{2} + \frac{1}{2\sqrt{2}},
$$
known as Tsirelson's bound~\cite{tsirel:original,tsirel:separated}. We will see later that there exists a state $\ket{\Psi}_{AB}$ shared
by Alice and Bob that achieves this bound when Alice and Bob perform measurements given by the observables
$A_0 = B_0 = X$ and $A_1 = B_1 = Z$ where we use $A_s$ and $B_t$ to denote the measurement corresponding to
questions $s$ and $t$ respectively.
The non-signaling principle that disallows faster than light communication between Alice and Bob alone
does not impose such a restrictive bound.
Hence, Popescu and
Rohrlich~\cite{popescu:nonlocal,popescu:nonlocal2,popescu:nonlocal3}
raised the question why nature is not more 'non-local'? That is,
why does quantum mechanics not allow for a stronger violation of
the CHSH inequality up to the maximal value of 1? To gain more
insight into this question, they constructed a toy-theory based on so-called
PR-boxes~\cite{PRbox}. Each such box takes inputs $s,t \in \01$ from
Alice and Bob respectively and simply outputs randomly chosen measurement outcomes
$a_s$,$b_t$ such that $s\cdot t = a_s \oplus b_t$. Each such box can be used exactly once, and no
notion of post-measurement states exists. Note that Alice
and Bob still cannot use this box to transmit any information.
However, since we have for all $s$ and $t$ that
$\Pr[s \cdot t = a_s \oplus b_t] = 1$, Tsirelson's bound is clearly is violated.
It is interesting to consider how our ability to perform information processing tasks changes, 
if PR-boxes indeed existed.
For example, it has been shown that Alice and Bob can use such PR-boxes to compute any Boolean 
function $f:\01^{2n} \rightarrow \01$ of their individual inputs $x \in \01^n$ and $y \in \01^n$ by communicating 
only a \emph{single} bit~\cite{wim:nonlocal}, which is even true when the boxes have slight 
imperfections~\cite{falk:nonlocal}.

Much interest has since been devoted to the study of such PR-boxes and their generalizations known as non-local boxes~\cite{wolf:universal,wehner05b,dariano,masanes:nonsignaling,barnum:broadcasting,barnum:teleport,imai}.
In particular, they have been 
incorporated in a very nice way 
into generalized non-signaling theories (GNST) due to Barrett~\cite{barrett:nonlocal} 
(the relation of such theories to generalizations of quantum theory is due to Hardy~\cite{hardy:axioms})
as a means of exploring foundational questions in quantum information. Intuitively, such theories allow for ``boxes'' involving many more inputs for
one or more players/systems, and also allow for some transformations between such boxes.
Both theories seek out physically motivated properties that single out 
quantum mechanics from other theories such as the classical world. 
These theories have also found interesting applications in deriving 
new bounds for quantum mechanics itself, e.g., monogamy of entanglement~\cite{toner:mono1}. 

In such a theory, $n$-partite
states are characterized by the probabilities of obtaining certain outcomes when performing a fixed set of local fiducial measurements 
on each system. For example, to describe a non-local box, consider a bipartite system, where Alice holds the first and Bob the
second system. We will label both Alice and Bob's measurements using $X$ and $Z$ in analogy to the quantum setting.
For convenience we will also label the outcomes using $a,b \in \01$, where the actual outcomes of $X$ and $Z$ in the quantum
setting could be recovered as $(-1)^a$, and use $p(A|M)$ to denote the probability of obtaining outcomes $A$ for measurements $M$. 
A non-local box is now given by the probabilities 
$p(0,0|X,X) = p(0,0|X,Z) = p(0,0|Z,X) = 1/2$,
$p(1,1|X,X) = p(1,1|X,Z) = p(1,1|Z,X) = 1/2$,
$p(0,1|Z,Z) = p(1,0|Z,Z) = 1/2$ and $p(A|M) = 0$ otherwise.
We will describe such theories
in more detail in section~\ref{sec:gnst}. We will also refer to GNST using the commonly used term ``box-world''. 

\subsection{Relaxed uncertainty relations}

Even when allowing more than two measurements and outcomes, such boxes remain very artificial constructs and it is not quite clear how they relate to quantum theory. 
In this note, we hope to provide a more intuitive understanding by 
showing that superstrong correlations can indeed be obtained by relaxing an uncertainty relation known to hold
in quantum theory. Consider \emph{any} anti-commuting observables $\Gamma_1,\ldots,\Gamma_{2n}$ satisfying
$$
\{\Gamma_j, \Gamma_k\} = 0
$$
whenever $j \neq k$ and
$$
\Gamma_j^2 = \id,
$$
for any $j \in [2n]$, and let $\Gamma_0 = i\Gamma_1\ldots\Gamma_{2n}$ (see section~\ref{sec:prelim} on how to construct such operators). 
It was shown in~\cite{ww:cliffordUR} that any quantum state obeys
\begin{equation}\label{eq:cliffordUR}
\sum_{j = 0}^{2n} \Tr\left(\Gamma_j \rho\right)^2 \leq 1,
\end{equation}
which also lead to several entropic uncertainty relations for such observables. 
To see why Eq.~(\ref{eq:cliffordUR}) itself can be understood as an uncertainty relation note
that $\Tr(\Gamma_j \rho)$ is the expectation value of measuring the observable $\Gamma_j$
on $\rho$. The probability of obtaining a measurement outcome $b \in \{\pm 1\}$ can furthermore
be written as $p(b|\Gamma_j) = 1/2 + b \Tr(\Gamma_j \rho)/2$. Hence, $\Tr(\Gamma_j \rho)$ can also be understood
as the bias towards a particular measurement outcome. Eq.~(\ref{eq:cliffordUR}) now tells us that this bias cannot
be arbitrarily large for all measurements $\Gamma_j$.
Note that we could rewrite the condition of Eq.~(\ref{eq:cliffordUR}) as
$||v||^2_2 \leq 1$
where $v = (\Tr(\Gamma_1\rho),\ldots,\Tr(\Gamma_{2n}\rho))$.
Whereas the uncertainty relations of~\cite{ww:cliffordUR} may appear unrelated to the problem of determining the strength of non-local correlations,
we will see later that Tsirelson's bound for the CHSH inequality is in fact a consequence of Eq.~(\ref{eq:cliffordUR}), 
when we use the fact that local anti-commutation and maximal violations of the CHSH inequality are closely 
related~\cite{tsirel:original, toner:monogamy,uffink:antiComm}.
Thus, as one might intuitively guess, bounds for the strength of non-local correlations are indeed closely related to uncertainty
relations, and such connections have been observed in a different form by~\cite{masanes:nonsignaling, barrett:nonlocal}.

What happens if we merely ask for $||v||^p_p \leq 1$, where $||\cdot||_p$ is the $p$-norm of the vector
$v$? Since Eq.~(\ref{eq:cliffordUR}) must hold for any quantum state, that is for any positive semi-definite matrix $\rho$
with $\Tr(\rho)=1$, it is clear that this allows operators $\rho$ which are no longer positive semi-definite. In the spirit
of Barrett's GNST, we will however restrict ourselves to allowing a particular set of fiducial measurements only, for which
the probabilities will remain positive and thus well-defined. In section~\ref{sec:pnonlocal}, we will describe a hierarchy
of such ``theories'' in detail, and investigate their power with respect to non-local correlations and information processing
problems. In particular, we will see that 
\begin{itemize}
\item For the CHSH inequality, we can obtain at most 
$$
\gamma = \frac{1}{2} + \frac{1}{2(2)^{1/p}} \mbox{ for } ||v||^p_p \leq 1.
$$
where in the limit of $p \rightarrow \infty$ the right-hand side becomes $1$, and we have a state that acts analogous
to a non-local box.
\item Furthermore, any unique XOR-game can be played with perfect success for $p \rightarrow \infty$.
\end{itemize}
It is instructive to consider what our relaxed uncertainty relation means in the case of a single qubit.
Note that for quantum mechanics we have $p=2$ in which case Eq.~(\ref{eq:cliffordUR}) corresponds to the statement that $v$ must lie inside the Bloch sphere.
Allowing different values of $p$ now constraints us to the corresponding $p$-spheres as depicted in Figure~\ref{fig:psphere}.
\begin{figure}\label{fig:psphere}
\begin{center}
\includegraphics[scale=0.7]{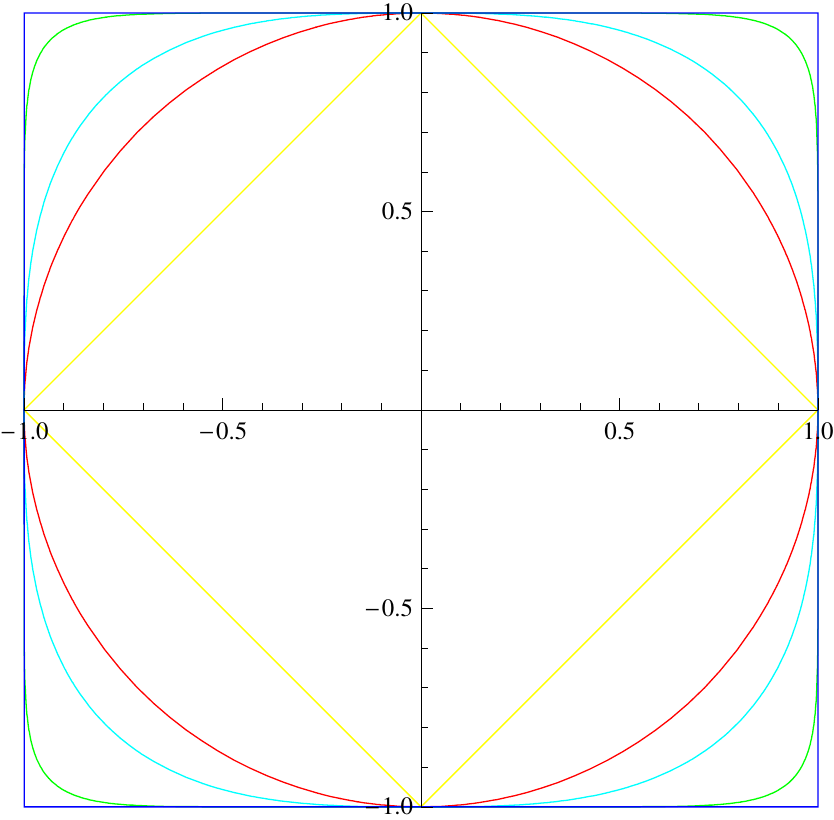}
\caption{$p$-norm unit circles in dimension 2 for $p=1,2,3,10,10000$}
\end{center}
\end{figure}
It is interesting to consider that even though for $p > 2$ we obtain non-local correlations that are \emph{stronger} than what quantum theory allows, we now
have a \emph{weaker} uncertainty relation than in quantum theory. It has previously been noted by Barrett~\cite{barrett:nonlocal} that GNST has no uncertainty relations for particular measurements. Our work makes this relation very intuitive.
In particular, for the case of $p\rightarrow \infty$ corresponding to a non-local box
we essentially place no restrictions on the bias $\Tr(\Gamma_j \rho)$ at all. Since Eq.~(\ref{eq:cliffordUR}) leads to the entropic uncertainty
relations on which the security of the protocols in the bounded-quantum-storage model~\cite{serge:bounded,serge:new,WW08:compose} is based, it may be worth
considering how certain cryptographic tasks change in the setting of non-local boxes. Indeed, it has recently been shown~\cite{wolf:personal} that privacy
amplification fails in a world based on non-local boxes. Whereas it is known that cryptographic tasks such as bit commitment and oblivious transfer
are compatible with the no-signaling principle~\cite{wehner05b}, little is known about them in general theories~\cite{ben:crypto}. 

It should be noted that except for a single qubit, Eq.~(\ref{eq:cliffordUR}) is of course only a necessary and not a sufficient condition for
$\rho \geq 0$. In higher dimensions, such relations are much more involved, but have been obtained for certain operators~\cite{kimura,bloch2,dietz:bloch} 
and also some operators relating more closely to unbiased measurements~\cite{wehner:bloch}. Relaxing this particular uncertainty relation is thus
only one way to go. Yet, due to the rich structure of the Clifford algebra of operators $\Gamma_1,\ldots,\Gamma_{2n}$ 
and their central importance for entropic uncertainty relations and so-called XOR non-local games (also known as two-party correlation inequalities) with
2 measurement outcomes, this small relaxation allows us to gain some insights into their role in quantum information processing tasks.

\subsection{Information processing in generalized non-local theories}

Inspired by these relaxations in terms of an operator $\rho$, we then construct a hierarchy of $p$-GNST theories exhibiting similar constraints. 
For such theories, we identify a single gbit (defined in~\cite{barrett:nonlocal}) with a single qubit obeying the relaxed uncertainty relations above.
That is, we will think of a single gbit as allowing three fiducial measurements labeled $X$, $Z$ and $Y$ in analogy to the quantum case. Whereas 
this choice is of course again quite arbitrary,
and heavily inspired by the quantum setting, it will allow us to gain a slightly better understanding of the relation of ``box-world'' and quantum theory later on.
We show that the states we allow above, as well as states in $p$-GNST's have several properties that set them apart from quantum theory. 
In particular, we will see that
\begin{itemize}
\item In $p$-GNST, there exists superstrong random access encodings. For example, there exists an encoding of $N=3^n$ bits into
$(2n+1)^{3/p} n$ \emph{gbits} such that we can retrieve any bit with probability $1-\eps$ for $\eps=2 \exp(-(2n+1)^{1/p}/2)$.
Quantumly on the other hand it is known that we require at least $(1-h(1-\eps))N$ \emph{qubits} to encode $N$ classical bits with the same recovery probability, where $h$ denotes the binary Shannon entropy.
\item As a consequence, in $p$-GNST there exist single server PIR scheme with $O(\polylog(N))$ bits of communication for an $N$ bit
database with large $N$, whereas quantumly $\Omega(N)$ bits are needed.
\item On the other hand, we show that in GNST it becomes much harder to learn a state in the sense of~\cite{aaronson:learn}.
In fact, unlike in the quantum setting, we can essentially not ignore even a small part of the information we are given about
a state.
\end{itemize}
Note that we thereby compare \emph{units of information}, gbits vs. qubits, irrespective of a physical dimension, where gbits were previously
defined in~\cite{barrett:nonlocal}.
It may not be surprising that such effects exist for Hermitian operators $\rho$, when all we essentially demand is
that the condition $||v||_p^p \leq 1$ is obeyed for any set of anti-commuting measurements. However, it will be interesting
to consider why for example the superstrong random access code encodings we find above are disallowed in quantum theory, but
allowed in GNST.

\subsection{Commuting measurements}

Although the results of local measurements suffice to describe quantum states \cite{hardy:axioms}, our results 
suggest that building a toy-theory around local measurements acting on fixed systems alone (such as GNST) may miss part 
of the flavor when considering some applications. Quantum mechanics has a rich structure of commuting and anti-commuting measurements built in which make 
no particular reference to locality. Uncertainty relations impose restrictions for non-commuting measurements, such as
for example the anti-commuting measurements $\Gamma_1, \ldots, \Gamma_{2n}$. However, we will see in 
section~\ref{sec:consistency} that also certain sets of commuting measurements cannot have arbitrary expectation values when measured
on a particular state $\rho$. 
As a simple example, 
consider a 2 qubit system shared between Alice and Bob, and consider the measurement $X \otimes \id$, $\id \otimes X$ and $X \otimes X$. 
Suppose that we have $\Tr((X \otimes \id)\rho) = \Tr((\id \otimes X)\rho) = 1$. This tells us
that when Alice and Bob measure $X$ locally, they obtain an outcome of `1' each with probability 1. However,
the measurement of $X \otimes X$ can very intuitively be viewed as Alice and Bob performing a local measurement of $X$ and taking the product
of their outcomes. Hence, we do not expect a simultaneous assignment of $\Tr((X \otimes X)\rho) = -1$ to be consistent 
with the previous two expectation values. We will formalize this intuition in section~\ref{sec:consistency}, where
we will derive a series of conditions such expectation values must obey which in spirit is similar to~\cite{qmp}.

GNST does satisfy these conditions for measurements that commute because they act on different subsystems. 
It does not exhibit any inconsistencies otherwise, as no commutation relations are defined for measurements on the same
system. The issue of such inconsistencies is further circumvented by the simple fact that a non-local box can only
be used once, and there is no notion of subsequent measurements on the same system.
This of course is perfectly adequate for studying the strength of non-local correlation between two space-like
separated systems for example, and led to such perplexing results as~\cite{wim:nonlocal}.
We will however see that it is essentially this lack of additional constraints that allows us to form
superstrong random access codes for example, and may indicate that using ``box-world'' to investigate
the role of the strength of non-local correlations within quantum theory itself is possibly doomed to fail.
It also indicates why defining a consistent notion of 'post-measurement' states for non-local boxes
is quite difficult, since many constraints that would allow such a task to succeed are simply not
present in box-world.

To see how box-world differs from quantum theory consider the measurements
$M_1 = X \otimes Z$, $M_2 = Z \otimes X$ and $M_3 = - XZ \otimes XZ$. These are related in exactly the same way 
as the measurements we considered above, except that in GNST there is no notion that $M_1$ and $M_2$ commute. 
Yet, we intuitively expect similar conditions to hold as for the measurements above when trying to form an analogy to the quantum setting. 
Indeed, one can 
easily
construct a unitary transformation that maps the 
measurements $M_1, M_2$ and $M_3$ into a form analogous to the above, where two of the measurements act on different systems.~\footnote{Consider $U = (\id \otimes H)\mbox{CNOT}(\id \otimes H)$}
In GNST, however, the separation into different systems is always a given, which may lead to difficulties when examining some problems
which are not really concerned with correlations among two distant systems alone, but to information processing in general.

\subsection{Outline}
Whereas we only examine a very small piece of the puzzle, our work hopes to shed some light on the relation between uncertainty relations,
non-local correlations and the role of above mentioned consistency constraints in information processing.
In section~\ref{sec:prelim} we first explain the basic concepts we need to refer to.
commuting measurements in more detail. In section~\ref{sec:ptheories} we then define a range of simple ``theories'' obtained by
relaxing the uncertainty relation for anti-commuting observables. To highlight the analogy with non-local boxes, we then define
a range of similar GNST-like theories in section~\ref{sec:gnst}. In sections~\ref{sec:superstrong}, \ref{sec:rac}, and \ref{sec:info} we then investigate the power of such
theories with respect to non-local correlations, random access codes, and information processing problems respectively. 
In section~\ref{sec:constraints} we then investigate why such effects are possible within GNST, but not in quantum theory.
Table \ref{table:results} summarizes similarities and differences among theories.

\section{Preliminaries}\label{sec:prelim}

\subsection{Basic concepts}\label{sec:basics}

In the following, we write $[n] \assign \{1,\ldots,n\}$ and use $X$, $Z$ and $Y$ to denote the well-known Pauli matrices~\cite{nielsen&chuang:qc}.
We also speak of a \emph{string of Paulis} to refer to a matrix of the form
\be\label{eq:pauligroup}
S_{ab} \assign X^{a_1}Z^{b_1} \otimes \ldots \otimes X^{a_n}Z^{b_n},
\ee
with $a = (a_1,\ldots,a_n)$, $b = (b_1,\ldots,b_n)$ and $a_j,b_j \in \01$.
We sometimes write the Pauli operator acting on subsystem $j$, with identity on the other subsystems as $$X_j = \id^{\otimes j-1} \otimes X \otimes  \id^{\otimes n-j-1} $$

The \emph{Pauli basis expansion} of a density matrix $\rho$ is given
by $\rho = (\id + \sum_{a,b} s_{ab} S_{ab})/d$, where we call
$s_{ab}$ the \emph{coefficient} of $S_{ab}$.
Consider the form $f(a,b,a',b') = (a,b') + (a',b)$, where we
write $(a,b) = \sum_j a_j b_j \mod 2$.
It it straightforward to convince yourself that for any pair $S_{ab}$
and $S_{a'b'}$ either $[S_{ab},S_{a'b'}] = 0$ if $f(a,b,a',b') = 0$ or $\{S_{ab},S_{a'b'}\} = 0$ if $f(a,b,a',b') = 1$.
Whereas Eq.~(\ref{eq:cliffordUR}) holds for any choice of anti-commuting measurements, it is worth noting that
in dimension $d=2^n$ we can find at most $2n+1$ anti-commuting operators given by
\begin{eqnarray*}
\Gamma_{2j-1} &=& Y^{\otimes (j-1)} \otimes X \otimes \id^{\otimes (n-j)}\\
\Gamma_{2j} &=& Y^{\otimes (j-1)} \otimes Z \otimes \id^{\otimes (n-j)},
\end{eqnarray*}
for $j=1,\ldots,n$ and $\Gamma_0 = i \Gamma_1\ldots\Gamma_{2n}$. Note that for $n=1$ we have
$\Gamma_1 = X$, $\Gamma_2 = Z$, $\Gamma_0 = Y$ and Eq.~(\ref{eq:cliffordUR}) is equivalent to the Bloch sphere 
condition.
We will also need the notion of a $p$-norm of a vector $v = (v_1,\ldots,v_n) \in \Real^n$ which is defined
as
$$
||v||_p \assign \left(\sum_{j=1}^n |v_j|^p\right)^{1/p}.
$$
Note that for $p=2$ this is just the Euclidean norm. Of particular interest to us will also be the $\infty$-norm defined
as $||v||_\infty \assign \lim_{p \rightarrow \infty} ||v||_p$ which can also be written as
$$
||v||_\infty = \max(|v_1|,\ldots,|v_n|).
$$

\subsection{Probability distributions}\label{probability}

Unlike previous descriptions of general probabilistic theories, our notation must be versatile enough to accommodate arbitrary choices of simultaneous commuting measurements, even if they do not act on separate subsystems. In quantum mechanics we may choose to measure $X \otimes X$ along with either $X \otimes \id, \id \otimes X$, or $Z \otimes Z,XZ \otimes XZ$. We will see that including this flexibility in a more general theory leads to new constraints.

First, we want to consider some finite set of measurements $\setM = \{M_1,\ldots,M_N\}$ where without loss of generality we assume that each measurement has the same finite set of outcomes $\setA$ and the $\setM$ is ordered lexiocraphically.
Although we initially impose no structure on $\setM$, in analogy to quantum mechanics we consider certain collections of measurements $\CM \subseteq \setM$ to have some property which directly corresponds to simultaneous measurability. In particular, we will consider the set of possible experiments 
$$\setE \assign \{ \CM \subseteq \setM ~\wedge~ \forall M_i,M_j \in \CM \mbox{  sim}(M_i,M_j)=0 \},$$ 
where ``sim'' is a predicate indicating simultaneous measurability that remains to be specified.
Of particular concern to us will be
the probability distributions $p$ over the outcomes $A \in \setAC$ of some set of simultaneously performed measurements $\CM \in \setE$.
We use $p(A|\CM)$ to denote the probability of obtaining outcomes $A = (A_1, A_2,\ldots,A_{|C|} ) \in \setAC$ for measurements $\CM \subseteq \setM$ where
we wlog take $\CM$ to be ordered lexicographically.
For simplicity, we will also write $p(A_1,\ldots,A_{n}|M_1,\ldots,M_n) \assign p((A_1,\ldots,A_n)|\{M_1,\ldots,M_n\})$.

What conditions do the functions $p: \setAC \times \CM \rightarrow [0,1]$ have to fulfill be a valid probability distribution for any experiment $\CM \in \setE$?
We require that the following conditions need to be satisfied for \emph{any} probability distribution
\begin{enumerate}
\item[(1)] Normalization: 
$\forall \CM \in \setE, \sum_{A \in \setAC} p(A|\CM) = 1$.
\item[(2)] Positivity: $\forall \CM \in \setE, \forall A \in \setAC, p(A|\CM) \geq 0$.
\end{enumerate}
The next condition may appear unfamiliar at first glance. 
Intuitively it says that the distributions of outcomes we obtain for commuting measurements are independent of what other commuting measurements
we perform.
\begin{enumerate}
\item[(3)] Independence: 
$$
\forall \CM, \CM' \in \setE \mbox{ with } \CM \subseteq \CM',~~ p((A_1,\ldots,A_{|\CM|})|\CM) = \sum_{A_{|\CM|+1},\ldots,A_{|\CM'|} \in \setA^{\times |C'|}} p((A_1,\ldots,A_{|\CM'|})|\CM'),
$$
\end{enumerate}
where, without loss of generality, we take the first $|\CM|$ outcomes to be associated with the measurements in $\CM$. 

Throughout this text, we explore the result of choosing two different ways of choosing simultaneous measurements. 
First, we consider simultaneous measurements on distinct systems as reflected in the construction of non-local boxes. 
Second, we consider a more general notion of such measurements based on commutation relations as in quantum mechanics. 
Note that in the quantum case such sets of mutually commuting measurements induce a partitioning of the Hilbert space into different systems in the finite-dimensional setting~\cite{summers:qftIndep, qmp}. 

Consider the set of measurements $\setM_P$ to be strings of Paulis on $n$-partite systems as defined in section~\ref{sec:basics}. The two different notions
of simultaneous measurements can now be expressed in two different choices of $\mbox{sim}(M_i,M_j)$, leading to two different sets of realizable experiments. 
To capture the first notion, we let
$$\setE_L \assign \{ \CM \subseteq \setM_P ~\wedge~ \forall M_i,M_j \in \CM \mbox{  local}(M_i,M_j)=0 \},$$
 where local$(M_i,M_j) = 0$ if and only if $M_i$ and $M_j$ act on different subsystems. For example, we have local$(X\otimes \id,\id \otimes Z)=0$. 
Second, we let
$$\setE_C \assign \{ \CM \subseteq \setM_P ~\wedge~ \forall M_i,M_j \in \CM ~[M_i,M_j]=0 \},$$ 
where all commuting measurements are simultaneously observable, as in quantum mechanics.
Clearly, $\setE_L \subseteq \setE_C$, since two measurements acting on two different subsystems commute.

When we restrict ourselves to $\setE_L$ we can express the independence condition from above in the more familiar form of no-signaling:
\begin{enumerate}
\item[(3')] No-signaling: $$\forall \CM,\CM' \in \setE_L \mbox{ with }  \CM \subseteq \CM',~~ p((A_1,\ldots,A_{|\CM|})|\CM) = \sum_{A_{|\CM|+1},\ldots,A_{|\CM'|} \in 
\setA^{\times |C'|}} p((A_1,\ldots,A_{|\CM'|})|\CM').$$
\end{enumerate}
Intuitively, the no-signaling condition just dictates that 
the marginal distribution of a particular subset of systems is \emph{independent} of the 
measurement choices on a disjoint subset of systems.
Therefore, we can simplify our description of marginals of no-signaling distributions to just $p ( A \in \setAC | \CM') = p(A | \CM)$, 
where the measurement choices on other parties are arbitrary. We will later see that imposing only the special case of the no-signaling
condition, versus the full independence condition of (3), makes a crucial difference in the power of the resulting theory with respect
to encoding information.

\begin{examp}\label{nlbox} 
Consider the set of local experiments for two parties with $\setA =\{-1,1\}, \setM =\{X_1, Z_1, X_2, Z_2 \}$. Let the probability distribution $p(A|C)$ be described by the following table. 

\begin{center}
\begin{tabular}{c|cccc|c}
${A}$ &&&&& \\ \hline &&&&&\\[-1.5ex]
$(1,1)$ & $\half$ & $\half$& $\half$& $0$& \\[1ex]
$(1,-1)$ &$0$ & $0$&$0$ &$\half$ & \\[1ex]
$(-1,1)$ & $0$&$0$ & $0$& $\half$& \\[1ex]
$(-1,-1)$ &$\half$ & $\half$&$\half$ &$0$ & \\[1ex] \hline
~&$\{X_1,X_2\}$&$\{X_1,Z_2\}$&$\{Z_1,X_2\}$&$\{Z_1,Z_2\}$& ${\CM}$ \\
\end{tabular}
\end{center}

Clearly, we have positivity, and the sum over each measurement setting (column) is $1$. Finally, note that the marginal probability distribution for either party is constant, $\forall \CM \in \setE_L, \forall A_1 \in \setA, \sum_{A_2\in\setA} p ((A_1,A_2) | \CM) = \half$, therefore this distribution is no-signaling.
\end{examp}

\subsection{Moments}\label{sec:moments}
Any finite, discrete probability distribution has a dual representation in terms of a finite number of moments~\cite{wainwrightjordan}. 
We define the product of the outcomes $A = (A_1,\ldots,A_{|C|}) \in \setAC$ of a collection of measurements $\CM \in \setE$ as $A^* = \prod_{i=1}^{|\CM |} A_i$.
The moment for this measurement is defined as
\begin{equation}\label{eq:momentDef}
m (\CM) \assign \sum_{A \in \setAC} p(A|\CM) A^*.
\end{equation}
Note that for the identity measurement this means $m(\id) = 1$ because of normalization. 
Also, if you consider the moment for some subset of $\CM$, by the independence principle this definition gives a unique value which does not depend on the choice of other measurements made simultaneously.

Since we will only be concerned with measurements with two outcomes $\setA=\{\pm 1\}$, we now restrict
ourselves to this case for simplicity. For the measurement of a single observable $\CM=\{M_1\}$ with outcome
$A_1 \in \setA$, we can easily recover the probabilities from the moments as
\begin{equation}\label{eq:probDef}
p((A_1) | \{M_1\}) = \frac{1}{2}\left(1 + A_1 ~ m(\{M_1\})\right).
\end{equation}
In subsequent notation, we will drop the brackets within parentheses when it increases readability.

Note that we can recover the probability for a specific set of outcomes $\hat{A} \in \setAC$ and measurements $\CM \in \setE$ from these moments. Without loss of generality, let $\CM = \{M_1,\ldots,M_n\}$.
\begin{eqnarray*}
&&\frac{1}{2^n} \sum_{\CM' \subseteq \CM} m(\CM')~\prod_{i,M_i\in \CM'} \hat{A}_i\\
&=& \frac{1}{2^n} \sum_{\CM' \subseteq \CM} \left( \sum_{A\in \setA^{\times |\CM'|}} p(A|\CM') ~\prod_{i,M_i\in \CM'} A_i \right)~\prod_{i,M_i\in \CM'} \hat{A}_i\\
&=& \frac{1}{2^n}  \sum_{A\in \setAC} p(A|\CM)~ \sum_{\CM' \subseteq \CM} ~\prod_{i,M_i\in \CM'} A_i \hat{A}_i
\end{eqnarray*}
The second line simply uses the definition of $m(\CM')$ and the third line uses the independence principle to write $p(A|\CM')$ in terms of $p(A|\CM)$, allowing us to move the sum over $\CM'$ inside.
Now note that the sum over $\CM'$ can be broken into $n$ sums over whether or not $M_i \in \CM'$. For each $M_i$, if it is in $\CM'$ we get a factor of $A_i \hat{A}_i$, otherwise a factor of $1$. 
\begin{eqnarray*}
&=& \frac{1}{2^n}  \sum_{A\in \setAC} p(A|\CM)~ \prod_{i=1}^n (1+A_i \hat{A}_i) \\
\end{eqnarray*}
Because the outcomes can only be $\pm 1$, the sum can give us only $0$ or $2$.
\begin{eqnarray*}
&=& \frac{1}{2^n}  \sum_{A\in \setAC} p(A|\CM)~ \prod_{i=1}^n 2 \delta_{A_i, \hat{A}_i} \\
&=& \frac{1}{2^n}  \sum_{A\in \setAC} p(A|\CM)~ 2^n \delta_{A, \hat{A}} \\
&=& p(\hat{A} |\CM)
\end{eqnarray*}

\subsection{Consistency constraints}\label{sec:ks}\label{sec:constraints}\label{sec:consistency}

We are now ready to investigate the constraints that arise due to simultaneous measurement of commuting observables and that will play a crucial role in 
understanding the differences between quantum theory and $p$-GNST.
Imagine two commuting measurements $[M_i,M_j]=0$, and their product $M_k=M_i M_j$. 
In quantum mechanics the outcome of the measurement $M_k$ is the same as the product of the outcomes of $M_i$ and $M_j$,
 which can be verified by expanding
$M_k$ in terms of $M_i$ and $M_j$ and using the fact that they have a joint eigenbasis. What happens if we take this to be true in any theory?
If we are only allowed to make local measurements, then this is a moot point. We can only get $X\otimes X$ by measuring $X\otimes \id$ and $\id \otimes X$ and multiplying the results.

But if we are allowed to make any combination of commuting measurements, this will impose some interesting conditions.
For example, in the quantum 
case we may have $M_1 = X \otimes X$, $M_2 = Z \otimes Z$ and $M_3 = XZ \otimes XZ$.
To see that this has consequences in terms of the moments, consider the simple example where
$m(M_1) = 1$ and $m(M_2) = 1$, which means that we will deterministically observe outcomes $A(M_1) = A(M_2) = 1$. Hence, $m(M_3) = -1$
should intuitively not be compatible with these two moments for $M_1$ and $M_2$. 

How can we formalize these conditions?
For example, Eq.~(\ref{eq:momentDef}) gives us that $$m(M_1 M_2) = m(M_1,M_2),$$ if we insist that outcomes of products of measurements equal the product of outcomes of individual measurements.
For a given set of commuting measurements $\CM= \{M_1,\ldots,M_m\}$ with $M_j^2 = \id$, let $s(M)$ be the $2^m$ element vector
whose $k$-th entry is given by 
\begin{equation}\label{eq:sVector}
s \assign [s(\CM)]_k \assign M_1^{k_1}M_2^{k_2}\ldots M_m^{k_m},
\end{equation}
with $k \in \01^m$ in lexicographic order.
We now define the \emph{moment matrix} $K_s$
by letting the entry in the $i$-row and $j$-th column be given by
$$
[K_s]_{ij} \assign m(s_i s_j)/2^m.
$$

\begin{claim}[Adapted from Wainwright and Jordan \cite{wainwrightjordan}]
Let $\CM = \{M_1,\ldots,M_m\}$ be a set of commuting measurements. 
Then $K_s \geq 0$ if and only if $p$ is a probability distribution (satisfying constraints
(1) and (2)).
\end{claim}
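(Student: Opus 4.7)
The plan is to exploit the $\mathbb{Z}_2^m$ group structure hidden in the indexing of $K_s$. Since the $M_k$ all commute and square to $\id$, the product $s_i s_j$ reduces to $s_{i \oplus j}$, where $\oplus$ denotes bitwise XOR of the binary labels. Hence
\begin{equation*}
[K_s]_{ij} \;=\; \frac{m(s_{i\oplus j})}{2^m},
\end{equation*}
so $K_s$ is a ``group-circulant'' matrix whose entries depend only on the XOR of its two indices. Matrices of this form are simultaneously diagonalized by the (symmetric, self-inverse) Hadamard transform $H$ with $H_{kl} = (-1)^{k\cdot l}/\sqrt{2^m}$.

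A direct calculation of $H K_s H$ then yields eigenvalues
\begin{equation*}
\lambda_k \;=\; \frac{1}{2^m}\sum_{l \in \{0,1\}^m} (-1)^{k\cdot l}\, m(s_l), \qquad k \in \{0,1\}^m.
\end{equation*}
Setting $\hat{A}^{(k)} \in \{\pm 1\}^m$ by $\hat{A}^{(k)}_i \assign (-1)^{k_i}$, the right-hand side is exactly the inversion formula derived in section~\ref{sec:moments}, so $\lambda_k = p(\hat{A}^{(k)} | \CM)$. In other words, the $2^m$ eigenvalues of $K_s$ are precisely the $2^m$ probabilities for the $2^m$ possible joint outcome strings of the commuting measurements $\CM$.

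The claim then follows immediately: positive semi-definiteness of $K_s$ is equivalent to non-negativity of all its eigenvalues, which is exactly condition (2). Normalization (condition (1)) comes along for free, since $\sum_k \lambda_k = \tr(K_s) = 2^m \cdot m(\id)/2^m = 1$ using $m(\id)=1$ from section~\ref{sec:moments}. The only subtle step is the identification $m(s_i s_j) = m(\{s_i, s_j\})$ used to write $K_s$ as a convolution matrix; this is the place where the commutativity hypothesis is essential, and it follows by applying the independence principle to the joint experiment $\{s_i, s_j, s_i s_j\}$ together with the convention, discussed at the start of section~\ref{sec:consistency}, that the outcome of a product of commuting observables equals the product of the individual outcomes. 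Once this identification is in place, the rest is just Fourier analysis on $\mathbb{Z}_2^m$.
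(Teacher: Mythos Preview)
Your proof is correct and essentially the same as the paper's: both diagonalize $K_s$ by the Hadamard transform $H_{kl}=(-1)^{k\cdot l}/\sqrt{2^m}$ and identify the resulting eigenvalues with the joint outcome probabilities $p(\hat{A}^{(k)}\mid\CM)$. The only difference is packaging---you invoke the $\mathbb{Z}_2^m$ group-circulant structure and the already-derived inversion formula, whereas the paper verifies $K_s = BPB^\top$ by a direct entrywise computation---but the underlying argument is identical.
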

\begin{proof}
In addition to $K_s$, we define two more $2^{m} \times 2^{m}$ matrices, whose components are labeled by vectors $i,j \in \{0,1 \}^{m}$ in lexicographic order
as
\benn
\ [P]_{ij} &= \delta_{ij} p( A = ((-1)^{i_1},\ldots, (-1)^{i_{m}}) | \CM).\\
[B]_{ij} &= \frac{1}{2^{m/2}}(-1)^{i \cdot j},
\eenn
It is easily verified that $B$ is a unitary matrix.
Note that $B$ is an example of a Hadamard matrix.
Now we will show that $K_s = B P B^{\top}$.
\benn
\left[ B P B^{\top} \right]_{ij} &= \frac1{2^m} \sum_{k,l \in \{0,1\}^{m}} (-1)^{i\cdot k} ~ \delta_{kl} ~p(( (-1)^{k_1},\ldots, (-1)^{k_{m}} ) | \CM)  (-1)^{l\cdot j} \\
 &=\frac1{2^m}  \sum_{k \in \{0,1\}^{m}} (-1)^{k \cdot (i\oplus j)}  p(( (-1)^{k_1},\ldots, (-1)^{k_{m}} ) | \CM)\\
 &= \frac1{2^m} \sum_{k \in \{0,1\}^{m}} \prod_{t=1}^{m} ((-1)^{k_t})^{(i_t \oplus j_t)} p(( (-1)^{k_1},\ldots, (-1)^{k_{m}} ) | \CM)\\
  &=\frac1{2^m}  \sum_{A \in \setAC} \prod_{t=1}^{m} A_t^{i_t } A_t^{ j_t}  p( A | \CM)\\
  &=\frac1{2^m}  m(s_i s_j) =  [K_s]_{ij}
\eenn
Clearly, if the probabilities $p(A | \CM)$ are non-negative (2), then $P \geq 0$ if and only if $K\geq 0$ since $B$ is unitary.
Similarly, the fact that $m(\id) = 1$, $B$ is unitary and the trace is cyclic ensures that $p$ satisfies condition (1).
\end{proof}

\begin{examp}\label{nlmoments}
As an example, consider the case of two commuting measurement $M_1$ and $M_2$ with $M_3 = M_1M_2$. We have $s = (\id, M_1, M_2, M_3)$
and
\benn
\mathbf{K}_{s}= \begin{pmatrix}
m(\id) & m(M_1) & m(M_2) & m(M_1 M_2) \\
m(M_1) & m(\id) & m(M_3) & m(M_2) \\
m(M_2) & m(M_3) & m(\id) & m(M_1) \\
m(M_3) & m(M_3) & m(M_1) & m(\id)  \end{pmatrix} 
\equiv \begin{pmatrix} 1&a&b&c \\ a &1&c &b\\b&c&1&a\\c&b&a&1\end{pmatrix}
\eenn
Demanding that the eigenvalues of this matrix, $\lambda = ((1+a-b-c),(-1+a+b-c),(-1+a-b+c),(1+a+b+c))$, be non-negative is enough to ensure that 
$\mathbf{K}_{s} \succeq 0$. Using the Sylvester criteria, we get the alternate constraints that each moment $|a,b,c| \leq 1$ and $1-a^2-b^2-c^2+2 a b c \geq 0$, and $\lambda_1 \lambda_2 \lambda_3 \lambda_4 \geq 0$.
\end{examp}

Our examples are reminiscent of the examples considered in the setting of contextuality~\cite{peres:book}. Note that our constraints are related, but
nevertheless of a different flavor since we only consider such constraints for measurements which all commute. It may be interesting to consider 
such a moment matrix in order to determine how ``non-contextual'' quantum theory is. 
In section~\ref{sec:pgnst} and \ref{sec:pnonlocal} we will develop classes of states which are restricted by imposing specific relationships among various moments.
In particular, it will be of crucial importance whether we merely impose such constraints for measurements acting on different systems, or include
such constraints for all commuting measurements.

\section{$p$-nonlocal theories and their properties}\label{sec:pnonlocal}\label{sec:ptheories}

We now define a series of so-called $p$-nonlocal ``theories'', each one more constrained than the previous. Our definition is thereby motivated by
the uncertainty relations of~\cite{ww:cliffordUR} stated above. 
We later relate our definitions to Barrett's GNST~\cite{barrett:nonlocal} and what are commonly known as non-local boxes. Our aim by constructing this series of simple theories is thereby merely to gain a more intuitive understanding of superstrong
non-local correlations due to non-local boxes.

\subsection{A theory without consistency constraints}

We start with the simplest of all $p$-theories, which forms the basis of all subsequent definitions. In essence, we will simply 
allow states violating the uncertainty relation in~\ref{eq:cliffordUR} without worrying about anything else.
In the spirit of Barrett~\cite{barrett:nonlocal} we start by defining the states which are
allowed in our theory, and then allow all linear transformations preserving the set of allowed states.
For simplicity, we will only consider the case of $d=2^n$.

\begin{definition}
A $d$-dimensional $p$-\emph{bin state} is a $d \times d$ complex Hermitian matrix 
$$
\rho = \frac{1}{d}\left(\id + \sum_{a,b} s_{ab} S_{ab}\right)
$$
satisfying
\begin{enumerate}
\item for all $a,b$, $-1 \leq s_{ab} \leq 1$.
\item for any set of mutually anti-commuting strings of Paulis $A_1,\ldots,A_m \in \Complex^{d \times d}$
$$
\sum_j |\Tr(A_j\rho)|^p \leq 1.
$$
\end{enumerate}
\end{definition}

It remains to be specified what operations and measurements we are allowed to perform on $p$-bin states.
We define
\begin{definition}
A $d$-dimensional $p$-\emph{bin theory} consists of
\begin{enumerate}
\item states $\rho \in \spdBin$ where $\spdBin$ is the set of $d$-dimensional $p$-bin states,
\item linear operations $T: \spdBin \rightarrow \spdBin$,
\item measurements described by observables $S_{ab} = S_{ab}^0 - S_{ab}^1$ where $S_{ab}^0$ and $S_{ab}^1$
are projectors onto the positive and negative eigenspace of $S_{ab}$ respectively.
As in the quantum case we let
$$
p_0 = \Tr(\rho S_{ab}^0) \mbox{ and } p_1 = \Tr(\rho S_{ab}^1).
$$
\end{enumerate}
Starting from a state, we may apply any set of operations $T$ followed by a single measurement.
\end{definition}

Note that by virtue of Eq.~(\ref{eq:cliffordUR}) any quantum state is a $p$-bin state. Note that
the converse however does not hold, since the conditions given above do not imply that a $p$-bin
state $\rho$ is positive semi-definite.
It seems very restrictive to limit ourselves to a single measurement at the end.
The reason for this is that for some $p$, there exist $p$-bin states to start with, valid operations
and measurements, followed by another operation that give us a states that are no longer a $p$-bin 
states~\cite{kitaev:magic}.
We return to this question, when we consider the set of allowed operations below.

Note that the above definition is well-defined. First, we want
that for any measurement $S_{ab}$, $\{p_0,p_1\}$ forms a valid probability distribution. 
A small calculation gives us that 
any $p$-nonlocal state $\rho$ we have
$$
p_v = \Tr(\rho S_{ab}^v) = \frac{1}{2}\left(1 + (-1)^v s_{ab}\right),
$$
and thus $0 \leq p_b \leq 1$ and $p_0 + p_1 = 1$.
Second, we want the non-signaling conditions to hold.
When measuring $S_{ab} \otimes S_{a'b'}$ on a bipartite state 
$$
\rho_{AB} = \frac{1}{d}\left(\id + \sum_{\ell,m,\ell',m'} S_{\ell,m} \otimes S_{\ell',m'}\right)
$$ 
we have that the probability to obtain outcome $u$ for the measurement on the first system is given by
$$
\Pr[u|ab,a'b'] = \sum_{v \in {0,1}} \Tr\left(\rho_{AB} (S_{ab}^u \otimes S_{a'b'}^v)\right) = \frac{1}{2}(\id + (-1)^u s_{a,b,0,0}),
$$
and hence $\Pr[u|ab,a'b'] = \Pr[u|ab,a''b'']$ for all $a',b',a'',b''$ as desired. 
A similar argument can be
made to show that the more general independence condition is satisfied.

\subsubsection{Basic Properties}

We now state some basic properties of this theory, which will also hold for a more restricted $p$-nonlocal theory as outlined below.

\begin{claim}\label{claim:inclusion}
If $\rho$ is a $p$-bin state, then $\rho$ is also a $q$-bin state
for $p,q \in \Integer$ with $q \geq p$.
\end{claim}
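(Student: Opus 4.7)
The plan is to reduce to a term-by-term comparison and invoke monotonicity of $|x|^q$ in $q$ when $|x| \le 1$. The first defining condition of a $p$-bin state, namely $-1 \le s_{ab} \le 1$ for all $a,b$, does not involve $p$ at all, so it is immediately inherited by any $q \ge p$. The content of the claim therefore lies entirely in the $p$-norm-type condition on mutually anti-commuting strings of Paulis.

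First I would fix an arbitrary set $A_1,\ldots,A_m$ of mutually anti-commuting strings of Paulis and set $t_j \assign \Tr(A_j \rho) \in \Real$. The crucial preliminary step is to observe that each individual bias is bounded: applying the $p$-bin condition to the singleton anti-commuting family $\{A_j\}$ gives $|t_j|^p \le 1$, hence $|t_j| \le 1$. (Equivalently, one may simply note that $-1 \le s_{ab} \le 1$ together with $A_j$ being a string of Paulis forces $|t_j| \le 1$.)

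Given $|t_j| \le 1$ and integers $q \ge p$, elementary monotonicity yields $|t_j|^q \le |t_j|^p$ for every $j$. Summing over $j$ and using the assumption that $\rho$ is a $p$-bin state,
\[
\sum_{j=1}^m |t_j|^q \;\le\; \sum_{j=1}^m |t_j|^p \;\le\; 1.
\]
Since $A_1,\ldots,A_m$ were an arbitrary mutually anti-commuting collection, this establishes condition (2) of the definition with $q$ in place of $p$. Combined with the trivial preservation of condition (1), $\rho$ is a $q$-bin state.

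There is no substantive obstacle here; the only thing worth being careful about is making sure each $|t_j|$ is in $[0,1]$ before applying the monotonicity step, since for values exceeding $1$ the inequality $|x|^q \le |x|^p$ would reverse. Restricting attention to integer $p,q$ as in the hypothesis is not actually necessary for the argument, but no generalization is claimed.
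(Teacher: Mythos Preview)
Your proof is correct and is essentially the same argument as the paper's: the paper's one-line proof simply notes that for $r \in [0,1]$ one has $r^q \le r^p$, which is exactly the monotonicity step you apply termwise after observing $|t_j| \le 1$. Your version is just a bit more explicit about checking $|t_j|\le 1$ before invoking this inequality.
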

\begin{proof}
This follows immediately from the fact that for any $r \in [0,1]$ we have $r^q \leq r^p$.
\end{proof}

Below, we will apply circuits consisting of the Clifford gates $\{CNOT,X,Z,Y,H\}$ and $\id$. It is easy to see that such unitary operations
are allowed transformations taking $p$-bin states to $p$-bin states.

\begin{claim}\label{claim:operations}
Let $\rho \in \spdBin$. Then for any circuit $U$ consisting solely of the gates $\{CNOT,X,Z,Y,H,\id\}$
we have $U \rho U^\dagger \in \spdBin$.
\end{claim}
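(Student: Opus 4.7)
The plan is to use the defining property of Clifford gates: conjugation by any element of the Clifford group maps strings of Paulis to signed strings of Paulis. It suffices to verify this for each of the generators $\{CNOT, X, Z, Y, H, \id\}$, and this can be checked directly on single-qubit Paulis (e.g., $HXH=Z$, $HZH=X$) and on the two-qubit $CNOT$; the general case then follows from the fact that each generator acts locally and the tensor-product structure of $S_{ab}$.

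Given this, I would proceed in two steps, one for each condition in the definition of a $p$-bin state. First, expanding $\rho = \frac{1}{d}(\id + \sum_{a,b} s_{ab} S_{ab})$ and conjugating, we get
\[
U\rho U^\dagger = \frac{1}{d}\Bigl(\id + \sum_{a,b} s_{ab}\, \eps_{ab} S_{a'(a,b),\, b'(a,b)}\Bigr),
\]
where $(a',b')$ is some permutation of the indices and $\eps_{ab} \in \{\pm 1\}$. Hence the new Pauli expansion coefficients are just $\pm s_{ab}$ (suitably reindexed), so $|s'_{a'b'}| = |s_{ab}| \leq 1$ and condition (1) is preserved.

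For condition (2), let $A_1, \ldots, A_m$ be any mutually anti-commuting strings of Paulis. Using cyclicity of the trace,
\[
\Tr(A_j\, U\rho U^\dagger) = \Tr\bigl((U^\dagger A_j U)\, \rho\bigr).
\]
The operators $B_j \assign U^\dagger A_j U$ are again signed Pauli strings by the Clifford property, and their anti-commutation is preserved since $\{B_j, B_k\} = U^\dagger \{A_j, A_k\} U = 0$ for $j \neq k$. Writing $B_j = \eta_j S_{\tilde a_j, \tilde b_j}$ with $\eta_j \in \{\pm 1\}$, the signs drop out after taking absolute values:
\[
\sum_j |\Tr(A_j\, U\rho U^\dagger)|^p = \sum_j |\Tr(S_{\tilde a_j, \tilde b_j}\, \rho)|^p \leq 1,
\]
where the final inequality applies condition (2) for $\rho$ to the anti-commuting family $\{S_{\tilde a_j, \tilde b_j}\}$. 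This establishes $U\rho U^\dagger \in \spdBin$.

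There is no real obstacle here beyond invoking the Clifford property. The only thing one needs to be slightly careful about is that condition (2) is stated for anti-commuting strings of Paulis (without signs), so one must observe that the absolute-value sum is insensitive to the $\pm$ signs that appear from Clifford conjugation; this is immediate. Any routine calculation checking that the generators act on Paulis as claimed can be omitted or deferred to a reference such as Nielsen and Chuang.
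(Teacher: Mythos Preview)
Your proof is correct and follows essentially the same approach as the paper: invoke the Clifford property that conjugation by the given gates sends Pauli strings to (signed) Pauli strings, note that anti-commutation is preserved under unitary conjugation, and then apply condition (2) for $\rho$ to the conjugated family. In fact you are slightly more careful than the paper, which asserts $U_j S_{ab} U_j^\dagger = S_{a'b'}$ without a sign, whereas you correctly track the $\pm 1$ and observe it is absorbed by the absolute values.
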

\begin{proof}
Note that $U$ is composed of single unitaries $U_j = \id^{j-1} \otimes V \otimes \id^{n-j}$
with $V \in \{X,Z,Y,H\}$ and unitaries $U'_j = \id^{j-1} \otimes \mbox{CNOT} \otimes \id^{n-j-1}$.
First, it is straightforward to verify that for any $a,b \in \{0,1\}^n$, there exist $a',b'\in \01^n$ such that
$U_j S_{ab} U_j^\dagger = S_{a'b'}$, and similarly for $U'_j$.
Second, applying a unitary to any set of anti-commuting operators again gives us anti-commuting operators.
Hence, since we have $\sum_j |\Tr(A_j \rho)|^p \leq 1$ for \emph{any} set of anti-commuting strings of Paulis,
the resulting state will also have this property.
\end{proof}

It will also be useful to know that
\begin{claim}\label{claim:tensor}
Let $\rho_1,\ldots, \rho_n \in {\setS_p^2}$. Then $\bigotimes_{i=1}^n \rho_i  \in \setS_p^{2^n}$.
\end{claim}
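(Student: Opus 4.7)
The plan is to verify both defining properties of a $p$-bin state for $\rho = \bigotimes_{i=1}^n \rho_i$. Property (1) is immediate, since any string of Paulis factorizes as $S_{ab} = \bigotimes_{i=1}^n S_{a_i b_i}$, so its coefficient in $\rho$ is the product $\prod_i s^{(i)}_{a_i b_i}$ of the single-qubit coefficients, which lies in $[-1,1]$ by hypothesis. The substance of the claim is property (2), which I prove by induction on $n$; the base case $n=1$ is precisely the assumption $\rho_1 \in \setS_p^2$.

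For the inductive step, write $\rho = \rho_1 \otimes \rho'$ with $\rho' = \bigotimes_{i=2}^n \rho_i \in \setS_p^{2^{n-1}}$ by the inductive hypothesis. Given any mutually anti-commuting set $A_1,\ldots,A_m$ of $n$-qubit strings of Paulis, decompose each as $A_j = P_j \otimes Q_j$ with $P_j \in \{\id, X, Z, XZ\}$ and $Q_j$ a string of Paulis on the remaining $n-1$ qubits, and partition the indices into bins $S_P = \{j : P_j = P\}$. Expanding $A_j A_k = -A_k A_j$ factor by factor yields two structural facts: within each bin $S_P$ the corresponding $Q_j$'s mutually anti-commute, and moreover for any $j \in S_{\id}$ the operator $Q_j$ anti-commutes with every other $Q_k$, regardless of which bin $k$ lies in. Consequently $S_{\id} \cup S_P$ is a mutually anti-commuting set of strings of Paulis for every non-identity $P$, so the inductive hypothesis applied to each enlarged set gives
$$
T_{\id} + T_P \leq 1, \qquad P \in \{X, Z, XZ\},
$$
where $T_P := \sum_{j \in S_P} |\Tr(Q_j \rho')|^p$.

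Using $|\Tr(A_j \rho)|^p = |\Tr(P_j \rho_1)|^p \cdot |\Tr(Q_j \rho')|^p$ and writing $v_P := |\Tr(P \rho_1)|^p$ (so $v_{\id} = 1$), property (2) for $\rho_1$ applied to the anti-commuting triple $\{X, Z, XZ\}$ yields $v_X + v_Z + v_{XZ} \leq 1$, and hence
$$
\sum_{j=1}^m |\Tr(A_j \rho)|^p \;=\; T_{\id} + \sum_{P \neq \id} v_P T_P \;\leq\; T_{\id} + (v_X + v_Z + v_{XZ})(1 - T_{\id}) \;\leq\; T_{\id} + (1 - T_{\id}) = 1,
$$
completing the induction.

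The main obstacle is precisely the identity bin: bounding each $T_P$ and each $v_P$ crudely by $1$ yields only $4$, and even after invoking $v_X + v_Z + v_{XZ} \leq 1$ one is left with $T_{\id} + 1$, which is useless once $T_{\id} > 0$. The decisive observation is that identity factors on the first qubit force $Q_j$ to anti-commute with \emph{every} other $Q_k$, so $S_{\id}$ may be pooled with any single non-identity bin; this sharpens the inductive hypothesis to $T_{\id} + T_P \leq 1$ and converts the sum $\sum_{P \neq \id} v_P T_P$ into a convex combination bounded by $1 - T_{\id}$.
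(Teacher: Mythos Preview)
Your proof is correct and follows essentially the same approach as the paper: induction on the number of qubits, factoring off a single qubit, binning the anti-commuting set by the Pauli on that qubit, and exploiting the key observation that the identity bin can be pooled with any non-identity bin. The only cosmetic differences are that you peel off the first qubit rather than the last, and your final algebraic step bounds $\sum_{P\neq\id} v_P T_P \leq (\sum_P v_P)(1-T_{\id})$ directly, whereas the paper reaches the same inequality by expanding a telescoping sum of manifestly nonnegative terms.
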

\begin{proof}
We proceed by induction. By assumption, $\rho_1  \in {\setS_p^2}$. We will show that for any states $\rho  \in {\setS_p^{2^n}}, \sigma \in {\setS_p^2}$, the state $\rho \otimes \sigma \in \setS_p^{2^{n+1}}$.

We need to prove that for any set of mutually anti-commuting Pauli's $A_j \in \Complex^{2^{n+1} \times 2^{n+1}}$
$ \sum_j |\Tr(A_j \rho \otimes \sigma )|^p \leq 1.$ Each $A_j$ can always be written in terms of a Pauli, $B_j$ acting on $\rho$, plus a Pauli $\{\id, X,Y,Z\}$ on $\sigma$. We separate 
the $A_j$ into groups according to which Pauli is appended to $B_j$. Then we can rewrite this as
\benn
\sum_{j_\id} |\Tr((B_{j_{\id}}\otimes \id) (\rho \otimes \sigma) )|^p  + \sum_{j_X} |\Tr((B_{j_X} \otimes X) (\rho \otimes \sigma) )|^p \\ + \sum_{j_Y}|\Tr((B_{j_Y} \otimes Y) (\rho\otimes \sigma) )|^p  + \sum_{j_Z}|\Tr((B_{j_Z} \otimes Z) (\rho\otimes \sigma) )|^p  \\
 = \sum_{j_\id} |\Tr(B_{j_{\id}} \rho )|^p + \sum_{j_X} |\Tr(B_{j_X} \rho )|^p |\Tr(X \sigma )|^p \\+ \sum_{j_Y}|\Tr(B_{j_Y} \rho )|^p |\Tr(Y \sigma )|^p + \sum_{j_Z}|\Tr(B_{j_Z} \rho )|^p |\Tr(Z \sigma )|^p \leq 1
\eenn
Since all the $A_j$ mutually anti-commute, then for different $j, j'$, $\{ B_{j} \otimes X , B_{j'} \otimes X \} = 0$ implies $\{ B_{j}  , B_{j'} \} = 0$, while $\{ B_{j} \otimes X , B_{j'} \otimes Y \} = 0 $ implies $ [ B_{j} , B_{j'} ] = 0$. Then because $\rho \in \setS_p^{2^n}$ and $ \{ B_{j_X}  , B_{j_X'} \} = 0$, and, for similar reasons $ \{ B_{j_X}  , B_{j_{\id}} \} =\{B_{j_{\id}'} , B_{j_{\id}} \} = 0$, we know
$$
\sum_{j_\id} |\Tr(B_{j_{\id}} \rho )|^p + \sum_{j_X} |\Tr(B_{j_X})|^p \leq 1
$$
Now we will shorten our notation by writing
\[
\begin{array}{cc} a_X = |\Tr(X \sigma )|^p & b_X = \sum_{j_X } |\Tr(B_{j_X} \rho)|^p \\ a_Y = |\Tr(Y \sigma )|^p & b_Y = \sum_{j_Y} |\Tr(B_{j_Y} \rho)|^p \\ a_Z=|\Tr(Z \sigma )|^p & b_Z = \sum_{j_Z} |\Tr(B_{j_Z} \rho)|^p \\ & b_\id = \sum_{j_\id} |\Tr(B_{j_{\id}} \rho )|^p\end{array}\]
This allows us to write inequalities implied by the uncertainty relation like:
\begin{eqnarray*}
a_X+a_Y+a_Z \leq 1\\
b_X +b_\id \leq 1 \\
b_Y +b_\id \leq 1 \\
b_Z +b_\id \leq 1
\end{eqnarray*}
We can also see that $a_X,a_Y,a_Z,b_X,b_Y,b_Z,b_\id \geq 0$. The task at hand is to show that these inequalities imply the one required of a state in $\setS_p^{2^{n+1}}$, which we can now rewrite as $$ a_X b_X +a_Y b_Y + a_Z b_Z +b_\id \leq 1.$$
We do this by writing down a sum of products of non-negative quantities like $1- a_X-a_Y-a_Z$ and noting that the result is non-negative.
\benn
a_X (1-b_X - b_\id)+a_Y (1-b_Y - b_\id)+a_Z (1-b_Z - b_\id) + (1-b_\id) ( 1 - a_X -a_Y-a_Z) \geq 0
\eenn
That equation can be rewritten as $ 1- (a_X b_X +a_Y b_Y + a_Z b_Z +b_\id) \geq 0$, which is what we set out to show. Therefore, $\rho \otimes \sigma$ is a valid state, and, by induction, so is $\bigotimes_{i=1}^n \rho_i  \in \setS_p^{2^n}$ for any $n$.
\end{proof}

\subsection{An analogue to box-world}

Note that in the above definition we have not placed any constraints at all on the expectation values of commuting measurements. 
This was not necessary, as we had allowed a single measurement only, where by the above definition $\id \otimes X$ formed
such a single measurement.
Now consider a two-qubit system, i.e., $d = 4$. Suppose that we have for a particular $\rho$ that
$$
\Tr\left((X \otimes \id) \rho\right) = \Tr\left((\id \otimes X) \rho \right) = \Tr\left((X \otimes X) \rho \right) = -1.
$$
Note that $\rho$ can be a perfectly valid state with respect to the definition given above, but yet we would
not consider this to be consistent behavior, if we were allowed to perform subsequent measurements.
We now introduce additional constraints that eliminate this inconsistency. It should be clear from section~\ref{sec:moments}
that that to achieve full consistency we would have to introduce certain constraints for commuting observables in general. Yet, we will
first restrict ourselves to observables on different systems in analogy to ``box-world''. We will show in section~\ref{sec:pgnst} that Barrett's GNST
and non-local boxes essentially correspond to this definition. We will
also see in section~\ref{sec:rac} and~\ref{sec:comm} that these additional constraints play a crucial role in the power of our model
with respect to information processing tasks.

\begin{definition}
A $p$-\emph{box state} is a $p$-bin state $\rho$, where in addition we require that for any set $\CM \in \setE_L$ of measurements
acting on different systems and $s(\CM)$ as defined in Eq.~(\ref{eq:sVector}) we have that the corresponding moment matrix $K_s$ defined in section~\ref{sec:ks} satisfies
$$
K_s \geq 0.
$$
\end{definition}

Note that claims~\ref{claim:inclusion} and~\ref{claim:tensor} holds analogously for $p$-box states. It is important to note though 
that claim~\ref{claim:operations} does not hold in this case, since for example the CNOT operation can lead to states violating the definition.

\subsection{A theory with consistency constraints}

Finally, we will impose all constraints required from our consistency considerations of section~\ref{sec:moments}. 

\begin{definition}
A $p$-\emph{nonlocal state} is a $p$-box state $\rho$, where in addition we require that
for any set of commuting measurements $\CM \in \setE_C$ and $s(\CM)$ as defined in Eq.~(\ref{eq:sVector}) 
we have that the corresponding moment matrix $K_s$ as defined in section~\ref{sec:ks} satisfies
$$
K_s \geq 0.
$$
\end{definition}

Again claims~\ref{claim:inclusion} and~\ref{claim:tensor} hold analogous to the above. When we include all consistency considerations, 
it is also easy to see that claim~\ref{claim:operations} holds for $p$-nonlocal states, since for any allowed unitary $U$ we
already have by the above that $\rho$ satisfies the constraints given by the set $\CM' = \{U^\dagger M_1 U,\ldots,U^\dagger M_m U\}$
and hence $U \rho U^\dagger$ remains a valid $p$-non-local state.

\section{Generalized non-local theories}\label{sec:gnst}

To create a closer analogy between our ``theories'' derived from relaxed uncertainty relations and non-local boxes, we now consider a related class of theories 
called \emph{generalized no-signaling theories} (GNST)~\cite{barrett:nonlocal}, for which we will consider similar relaxations.
As already sketched in the introduction, states in a GNST are defined operationally. Consider a laboratory setup where we have a device which prepares a specific state. We then use a measuring device which has a choice of settings allowing us to measure different properties of the system. The measuring device gives us a reading specifying the outcome of the measurement. A particular state in GNST is described completely by means of the probabilities of obtaining each outcome when
performing a fixed set of {\em fiducial} measurements.
For example, for a set of fiducial measurements $\setM = \{X,Z,Y\}$ with outcomes $\setA = \{\pm 1\}$, the probabilities
$p(A|\CM)$ for all $A \in \setA$ and $\CM \in \setM$ form a description of the state. Hence, we will simply use $p$ to refer to a state given by said conditional probabilities.
The idea behind considering fiducial measurements stems from the idea that there exists a set of measurement choices that suffice to fully describe the system. In classical mechanics, for instance, we can always in principle make a single measurement which outputs all the information necessary to describe a state. For a qubit, on the other hand, we would need results from at least three different incompatible measurement settings, e.g., spin in three orthogonal directions. We refer to~\cite{barrett:nonlocal} for a definition of GNST
and its allowed operations. For us it will only be important to note that similar to the setting of non-local boxes, we can make only one measurement on each system, and there is no real notion of post-measurement states defined.

In the following, we will be interested in the special case of
multi-partite systems where on each system we can perform one of three fiducial measurements with outcomes $\pm 1$. 
Using our notation from section~\ref{probability} we write
the set of realizable experiments for GNST as
$$
\setE_G = \{\forall k \in \{1,2,3\}^n : \{W_{1,k_1},\ldots,W_{n,k_n}\} \},
$$
with $W_{i,k_i}$ denoting a choice of the $k_i$th measurement on the $i$th system.
Later we will connect these measurement choices with Pauli measurements via the relation $W_{i,1} = X_i, W_{i,2} = Z_i, W_{i,3} = X_i Z_i$. 
A key point of this definition will be that the partitioning of measurements into $n$ systems will be fixed.
We also demand that probability distributions should satisfy an independence principle. As we pointed out, when restricted to partitions over disjoint parties, this just reduces to the no-signaling principle. That is, the choice of measurement on one subset of particles can not be used to 
send a signal to a disjoint subset.

In analogy to the quantum setting~\cite{barrett:nonlocal}, we let one gbit refer to a single system on which we can perform our set of fiducial measurements given above.
Our definition of a gbit thereby slightly differs from the definition given in~\cite{barrett:nonlocal}, which only allows two fiducial measurements $X$ and $Z$ on a single
gbit. Yet, in order to compare the hierarchy of GNST-like theories we will construct below to the $p$-box states from above we adopt this slightly more general
definition in analogy to a single qubit in the quantum case.
Note that for the set of measurements $\CM \in \setE_G$ specified above, an $n$-gbit state, specified by $p : \setA^{\times n} \times \CM \rightarrow [0,1]$, 
is in GNST if $p$ satisfies constraints (1), (2), and (3') in section~\ref{probability}.
\begin{examp}
Consider the following state of one particle in GNST (or one gbit):
$$
\begin{array}{ccccc} p( A=+1 | M = X) &=& s_x &=& 1- p( A=-1 | M = X) \\ p(A=+1 | M = Z) &=& s_y &=& 1- p( A=-1 | M = Z) \\ p(A=+1 | M = XZ) &=& s_z &=& 1- p( A=-1 | M = XZ)\end{array}
$$
This state is normalized, and positivity requires $s_x,s_y,s_z \in [0,1]$. 
The state would be equivalent to the state of an arbitrary qubit if and only if $s_x^2 + s_y^2 +s_z^2 \leq 1$, that is, if we are constrained to the Bloch sphere. 
\end{examp}
For multi-partite states the difference between constraints on qubits and gbits becomes more complicated. 
We now turn to describing a hierarchy of constraints on GNST theories which will be analogous to uncertainty conditions in $p$-nonlocal theories and quantum mechanics.

\subsection{$p$-GNST}\label{sec:pgnst}

Even though states in GNST are defined without any particular structure to their measurements embedded, we will now impose a physically motivated structure.
In particular, we will simply \emph{imagine} in analogy to the quantum setting that measurements $X$, $Z$ and $Y$ obey the same anti-commutation relations
as the Pauli matrices $\{X,Z\} = \{Z,Y\} = \{X,Y\} = 0$. In our definition below, we will for simplicity write $\{\cdot,\cdot\}$ to indicate that we 
imagine such an anti-commutation constraint to hold exactly when the string of Paulis $\prod_i W_{i,k_i}$ associated with each $\CM$ would anti-commute.

First of all, this will allow us to artificially impose an uncertainty relation just like Eq.~(\ref{eq:cliffordUR}).
\begin{definition}
A state is in $p$-GNST if it is in GNST and for any set of measurements $S= \{\CM \in \setE_G\}$ 
satisfying that for all $\CM,\CM' \in S$, $\{\CM,\CM'\} = 0$ 
we have
\be\label{eq:pgnst}
\sum_{\CM \in S} |m(\CM)|^p \leq 1.
\ee
\end{definition}
Note that for $p \rightarrow \infty$ this condition no longer restricts the states, because we get $\max_{\CM \in \setS} |m(\CM)| \leq 1$, 
which is true for the original GNST, and non-local boxes. 
If we would actually add such commutation and anti-commutation constraints 
we could now again distinguish between adding the consistency constraints of section~\ref{sec:moments} only for
measurements acting on different systems, or for all commuting measurements in analogy to the $p$-box and $p$-nonlocal theories.
In analogy to GNST, where commutation relations were only defined for measurements acting on different systems however, we will
stick to this setting, even when considering $p < \infty$. A $p$-GNST state is thus essentially analogous to a $p$-box state, except we are allowed to make simultaneous measurements of locally disjoint systems.

\section{Superstrong non-locality}\label{sec:superstrong}

Before we show that relaxing the uncertainty equation of Eq.~(\ref{eq:cliffordUR}) leads to superstrong non-local correlations, let's take a look at what effect this uncertainty relation actually has on quantum strategies for the
CHSH inequality.
For this purpose, we will rewrite Tsirelson's bound for the CHSH inequality in its more common form as
$$
|\langle A_0 \otimes B_0 \rangle + \langle A_0 \otimes B_1 \rangle + 
\langle A_1 \otimes B_0 \rangle - \langle A_1 \otimes B_1 \rangle| \leq 2 \sqrt{2},
$$
where we use $A_0,A_1$ and $B_0, B_1$ to denote Alice's and Bob's observables respectively where $A_0^2 = A_1^2 = B_0^2 = B_1^2 = \id$.
We will use the fact that in order to achieve the maximum possible quantum violation we must have
$\{A_0,A_1\} = 0$ and $\{B_0,B_1\} = 0$~\cite{tsirel:original,toner:monogamy,uffink:antiComm}. For $M_1 = A_0 \otimes B_0$, $M_2 = A_0 \otimes B_1$,
$M_3 = A_1 \otimes B_0$ and $M_4 = A_1 \otimes B_1$ this means that we have
$\{M_1,M_2\} = \{M_1,M_3\} = \{M_2,M_4\} = \{M_3, M_4\} = 0$. Using the uncertainty relation of Eq.~(\ref{eq:cliffordUR})
proving Tsirelson's bound is equivalent to solving the following optimization problem
\begin{sdp}{maximize}{$\langle M_1 \rangle + \langle M_2 \rangle + \langle M_3 \rangle - \langle M_4 \rangle$}
&$\langle M_1\rangle^2 + \langle M_2 \rangle^2 \leq 1$\\
&$\langle M_1\rangle^2 + \langle M_3 \rangle^2 \leq 1$\\
&$\langle M_2\rangle^2 + \langle M_4 \rangle^2 \leq 1$\\
&$\langle M_3\rangle^2 + \langle M_4 \rangle^2 \leq 1$
\end{sdp}
By using Lagrange multipliers, it is easy to see that for the optimum solution we have
$\langle M_1 \rangle^2 = \langle M_4\rangle^2$ and $\langle M_2 \rangle^2 = \langle M_3\rangle^2$.
By considering all different possibilities, we obtain that with
$x = \langle M_1 \rangle = - \langle M_4 \rangle$ and $y = \langle M_2 \rangle = \langle M_3 \rangle$
our optimization problem becomes
\begin{sdp}{maximize}{$2 (x+y)$}
&$x^2 + y^2 \leq 1$
\end{sdp}
Again using Lagrange multipliers, we now have that the maximum is attained at $x = y = 1/\sqrt{2}$
giving us Tsirelson's bound. 

Tsirelson's bound can hence be understood as a consequence of the uncertainty relation of~\cite{ww:cliffordUR}.
Thus, we intuitively expect that relaxing this relation affects the strength of non-local correlations.
In a similar way, one can view monogamy of non-local correlations as a consequence of Eq.~(\ref{eq:cliffordUR})~\cite{gs:mfromur}.

\subsection{CHSH inequality}

\subsubsection{In $p$-theories}

To see what is possible in $p$-theories, we first construct the equivalent of a maximally entangled state.
Let
$$
\rho_p = \frac{1}{2} \left[\id + \left(\frac{1}{2}\right)^\frac{1}{p}(X + Y)\right].
$$
Note that for $p\rightarrow \infty$ this gives us
$$
\rho_\infty = \frac{1}{2} \left[\id + X + Y\right].
$$
We now proceed analogously to the quantum case to construct
$$
\eta_1 = \mbox{CNOT} (\rho_p \otimes \outp{0}{0}) \mbox{CNOT}^\dagger,
$$
which by claim~\ref{claim:operations} is a valid $p$-bin and $p$-nonlocal state. It can also be verified
that $\eta_1$ forms a valid $p$-box state.

\begin{claim}
Let $A_1 = X$, $A_2 = Y$, $B_1 = X$ and $B_2 = Y$ be Alice and Bob's observables respectively.
Then
$$
\langle CHSH_p \rangle = \Tr(\eta_1(A_1 \otimes B_1 + A_1 \otimes B_2 + A_2 \otimes B_1 - A_2 \otimes B_2)) = 4 \frac{1}{2^{1/p}},
$$
for all $p$-theories.
\end{claim}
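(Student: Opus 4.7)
The plan is to compute each of the four expectation values $\langle A_i \otimes B_j\rangle_{\eta_1}$ directly and add them up; the answer will be $4\,(1/2)^{1/p}$ as claimed. Since $\eta_1$ is defined by conjugating a product state with CNOT, and CNOT is unitary, I will use the cyclic property of the trace to push the CNOT conjugation onto the observable:
\[
\Tr\bigl(\eta_1\,(A_i \otimes B_j)\bigr) \;=\; \Tr\Bigl(\bigl(\rho_p \otimes \outp{0}{0}\bigr)\,\mathrm{CNOT}^{\dagger}(A_i \otimes B_j)\,\mathrm{CNOT}\Bigr).
\]
Since the observable is now a (possibly signed) string of Paulis and the state is a product, the trace factors as $\Tr(P_A \rho_p)\,\Tr(P_B \outp{0}{0})$ up to a sign.

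The key calculation is therefore to tabulate how CNOT conjugation transforms the four relevant two-qubit Paulis. Using the standard rules $\mathrm{CNOT}\,(X\otimes I)\,\mathrm{CNOT}=X\otimes X$, $\mathrm{CNOT}\,(I\otimes Z)\,\mathrm{CNOT}=Z\otimes Z$ and the fact that conjugation is a homomorphism on the Pauli group, I would compute
\[
\mathrm{CNOT}\,(X\otimes X)\,\mathrm{CNOT} = X\otimes I,\qquad \mathrm{CNOT}\,(Y\otimes X)\,\mathrm{CNOT} = Y\otimes I,
\]
\[
\mathrm{CNOT}\,(X\otimes Y)\,\mathrm{CNOT} = Y\otimes Z,\qquad \mathrm{CNOT}\,(Y\otimes Y)\,\mathrm{CNOT} = -X\otimes Z,
\]
where the last two follow from $XZ=-iY$, $YZ=iX$, $XY=iZ$. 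Substituting back and using $\Tr(X\rho_p)=\Tr(Y\rho_p)=(1/2)^{1/p}$, $\Tr(Z\rho_p)=0$, together with $\Tr(I\outp{0}{0})=\Tr(Z\outp{0}{0})=1$ and $\Tr(X\outp{0}{0})=\Tr(Y\outp{0}{0})=0$, gives
\[
\langle X\otimes X\rangle = \langle X\otimes Y\rangle = \langle Y\otimes X\rangle = (1/2)^{1/p},\qquad \langle Y\otimes Y\rangle = -(1/2)^{1/p}.
\]
Summing with the CHSH signs $(+,+,+,-)$ yields $4\,(1/2)^{1/p}$.

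There is no real obstacle here; the only subtlety is being careful with the signs produced by products of Paulis under CNOT conjugation (in particular the minus sign in $\mathrm{CNOT}\,(Y\otimes Y)\,\mathrm{CNOT}=-X\otimes Z$), since this is precisely what converts $-\langle A_2\otimes B_2\rangle$ in the CHSH expression into a positive contribution. Note finally that the claim is a statement about $p$-theories in general (not just quantum): nothing in the computation uses positivity of $\eta_1$, only the Pauli expansion of $\rho_p$, linearity of the trace, and the Pauli-permuting action of CNOT, all of which are available in every $p$-theory once $\eta_1$ is known to be a valid state (which was established just before the claim via Claim~\ref{claim:operations}).
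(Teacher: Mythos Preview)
Your proof is correct and is essentially the same computation as the paper's: the paper conjugates the state by $\mathrm{CNOT}$ to obtain the explicit Pauli expansion $\eta_1 = \tfrac{1}{4}\bigl(\id + 2^{-1/p}(X\otimes X + X\otimes Y + Y\otimes X - Y\otimes Y) + Z\otimes Z\bigr)$ and reads off the four expectation values, whereas you push the $\mathrm{CNOT}$ onto the observables instead---the dual (Heisenberg-picture) version of the same Pauli-algebra calculation.
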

\begin{proof}
This follows immediately by noting that
$$ 
\eta_1 = \frac{1}{4}\left(\id + \frac{1}{2^{1/p}}\left(X \otimes X + X \otimes Y + Y \otimes X - Y \otimes Y\right) + Z \otimes Z\right).
$$
\end{proof}

We can also phrase this statement in terms of probabilities as stated in the introduction, by noting that the maximum
probability that Alice and Bob win the CHSH game is given by
$$
\frac{1}{2} + \frac{\langle CHSH_p\rangle}{8} = \frac{1}{2} + \frac{1}{2 \cdot 2^{1/p}}.
$$
It is important to note that this violation can be obtained even when imposing the additional consistency constraints from section~\ref{sec:moments}.

\subsubsection{In $p$-GNST}
We already saw in the introduction that GNST admits states analogous to a non-local box, allowing for a maximal violation of the CHSH inequality. 
We now show that similar states exist for $p$-GNST theories analogous to $p$-box states. We first phrase the CHSH inequality in terms
of probabilities.
In particular, consider the GNST state specified by $p((A_1, A_2) | \{M_1, M_2\}) = \frac14(1+ (-1)^{\delta_{M_1,Z_1} \delta_{M_2,Z_2}} A_1 A_2 \lambda)$ for some $\lambda$ to be chosen below. 
If each party measures $X$ or $Z$ on their state and outputs the result $\pm1$, the probability that Alice and Bob win the CHSH game is given by
\benn
\frac14(& p(1,1|X_1,X_2)+ p(-1,-1|X_1,X_2)+p(1,1|X_1,Z_2)+p(-1,-1|X_1,Z_2)\\
+&p(1,1|Z_1,X_2)+p(-1,-1|Z_1,X_2)+p(1,-1|Z_1,Z_2)+p(-1,1|Z_1,Z_2)) = \frac{1+\lambda}{2}
\eenn
In terms of the moments, $m(X_1,X_2)=m(X_1,Z_2)=m(Z_1,X_2)=-m(Z_1,Z_2)=\lambda$, and this becomes
\benn
\frac14(2+\half(m(X_1,X_2)+m(X_1,Z_2)+m(Z_1,X_2)-m(Z_1,Z_2))) = \frac{1+\lambda}{2}
\eenn
Now we can consider the maximum value of $\lambda$ that is a valid state in $p$-GNST. The requirements listed in example~\ref{nlmoments} only restrict $|\lambda | \leq 1$. Eq.~(\ref{eq:pgnst}) requires $|m(X_1,X_2)|^p+ |m(X_1,Z_2)|^p=|m(Z_1,X_2)|^p+ |m(Z_1,Z_2)|^p=2 | \lambda |^p \leq 1 \rightarrow \lambda = (\half)^{\frac1p}$. Therefore in a $p$-GNST it is possible to win the CHSH game with probability $1/2 + 1/(2 \cdot 2^{1/p})$.

\subsection{XOR games}

We now investigate the case of general 2-player XOR-games for $p \rightarrow \infty$.
In such a game we have an arbitrary (but finite) set of questions $S$ and $T$ from which Alice's and Bob's questions
$s \in S$ and $t \in T$ are chosen according to a fixed probability distribution $\pi: S \times T \rightarrow [0,1]$.
Yet, the set of possible answers remain $A = B = \01$ for Alice and Bob respectively. The game furthermore
specifies a predicate $V: A \times B \times S \times T \rightarrow \01$ that determines the winning answers
for Alice and Bob. In an XOR game, this predicate depends only on the XOR $c = a \oplus b$ of Alice's answer $a$ 
and Bob's answer $b$. We thus write $V(c|s,t) = 1$ if and only if answers $a \oplus b$ satisfying $a \oplus b = c$
are winning answers for questions $s$ and $t$. We will also restrict ourselves to unique games, which have the
property that for any $s,t,b$, there exists exactly one winning answer $a$ for Alice (and similarly for Bob).

First of all, note that in the quantum case we may write the probability that Alice and Bob return answers
$a$ and $b$ with $a \oplus b = c$ as
$$
p(c|s,t) = \frac{1}{2}(1 + (-1)^c \bra{\Psi}A_s \otimes B_t\ket{\Psi}),
$$
where we again use $A_s$ and $B_t$ to denote Alice's and Bob's observable corresponding to questions
$s$ and $t$ respectively and $\ket{\Psi}$ denotes the maximally entangled state. Note that
we again have $(A_s)^2 = (B_t)^2 = \id$ from the fact that both measurements
have only two outcomes. The probability that Alice and Bob win the game can then be written as
$$
\sum_{s,t} \pi(s,t) \sum_c V(c|s,t) p(c|s,t).
$$
Let $v_{st} = \bra{\Psi}A_s \otimes B_t\ket{\Psi}$.
First of all note that for $p \rightarrow \infty$
\begin{equation}\label{eq:bigState}
\frac{1}{d}\left(\id + \sum_{st} v_{st} \Gamma_s \otimes \Gamma_t\right)
\end{equation}
with $d = 2^{\max{|S|,|T|}}$ and $\Gamma_s, \Gamma_t$ anti-commuting observables 
as defined in section~\ref{sec:prelim} is a valid state for any $|v_{st}| \leq 1$.
Hence, we can immediately see that

\begin{corollary}
In any $\infty$-theory, there exists a strategy for Alice and Bob to win a unique XOR game with certainty.
\end{corollary}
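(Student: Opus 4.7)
The plan is to exhibit an explicit $\infty$-theory strategy whose winning probability is $1$ on every question pair. Since the game is unique, for every $(s,t)$ there is exactly one value $c^*(s,t)\in\{0,1\}$ with $V(c^*(s,t)\mid s,t)=1$. Thus winning with certainty amounts to arranging, for all $(s,t)$, that $p(c^*(s,t)\mid s,t)=1$. Working in the quantum-style parametrisation $p(c\mid s,t)=\tfrac12\bigl(1+(-1)^c\,\langle A_s\otimes B_t\rangle\bigr)$, this is the same as forcing the correlator $\langle A_s\otimes B_t\rangle$ to equal $(-1)^{c^*(s,t)}$ for every question pair.

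I would then take Alice's and Bob's observables to be the anti-commuting operators $A_s=\Gamma_s$ and $B_t=\Gamma_t$ from section~\ref{sec:prelim}, and pick the shared state to be the one from Eq.~(\ref{eq:bigState}) with $v_{st}=(-1)^{c^*(s,t)}$, namely
\[
\rho \;=\; \frac{1}{d}\Bigl(\id+\sum_{s\in S,\,t\in T}(-1)^{c^*(s,t)}\,\Gamma_s\otimes \Gamma_t\Bigr),
\]
where $d=2^{\max\{|S|,|T|\}}$. The excerpt already asserts that for $p\to\infty$ this is a valid state whenever $|v_{st}|\le 1$; since $|(-1)^{c^*(s,t)}|=1$, this condition is trivially satisfied and no uncertainty constraint is violated.

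Next I would compute the correlators. Using $\Gamma_j^2=\id$ and $\{\Gamma_j,\Gamma_k\}=0$ for $j\neq k$, each $\Gamma_j$ is traceless and the family $\{\Gamma_s\otimes\Gamma_t\}$ is Hilbert--Schmidt orthogonal, with $\Tr(\Gamma_s\Gamma_{s'}\otimes \Gamma_t\Gamma_{t'})=d^2\,\delta_{s,s'}\delta_{t,t'}$. Plugging $\rho$ into $\Tr\bigl((\Gamma_s\otimes\Gamma_t)\rho\bigr)$ makes every off-diagonal term vanish and leaves exactly $v_{st}=(-1)^{c^*(s,t)}$. Therefore $p(c^*(s,t)\mid s,t)=\tfrac12(1+1)=1$ and $p(c\mid s,t)=0$ for the losing XOR value. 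Averaging over the question distribution $\pi(s,t)$ then yields winning probability $\sum_{s,t}\pi(s,t)\sum_c V(c\mid s,t)p(c\mid s,t)=1$.

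The only genuine thing to check is that this strategy fits inside an $\infty$-theory at all — i.e., that the state is well-defined and the fiducial measurements on Alice's and Bob's sides are legitimate. Validity of $\rho$ is precisely the content of the assertion preceding Eq.~(\ref{eq:bigState}); the $\Gamma_s,\Gamma_t$ are bona fide observables on each subsystem of dimension $d$. Hence the main obstacle, such as it is, is simply to verify that no hidden constraint from the $\infty$-theory (positivity of marginals, independence/no-signaling) is broken, which follows from the general argument given earlier that any state of the form $\tfrac1d(\id+\sum s_{ab}S_{ab})$ with $|s_{ab}|\le 1$ produces valid fiducial probabilities and non-signaling marginals.
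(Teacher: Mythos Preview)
Your proposal is correct and follows essentially the same approach as the paper: choose the state of Eq.~(\ref{eq:bigState}) with $v_{st}=(-1)^{c^*(s,t)}$, use the anti-commuting $\Gamma_s,\Gamma_t$ as Alice's and Bob's observables, and read off $p(c^*(s,t)\mid s,t)=1$. Your explicit verification of the correlators via Hilbert--Schmidt orthogonality of the $\Gamma_s\otimes\Gamma_t$ is a welcome bit of detail the paper leaves implicit, but the strategy is identical.
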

\begin{proof}
Consider the state given in Eq.~(\ref{eq:bigState}) with $v_{st} = \pm 1$ such that
$p(c|s,t) = 1$ whenever $V(c|s,t) = 1$. Let Alice and Bob's measurements be given by
$\Gamma_s$ and $\Gamma_t$ for questions $s$ and $t$ respectively, which are valid measurements
for all $p$-theories with $\Gamma_s, \Gamma_t$ constructed as in section~\ref{sec:prelim}.
\end{proof}

We leave it as an open question to examine the case of $p < \infty$ for XOR games, since our aim was 
merely to show that superstrong correlations can exist, if we allow for relaxed uncertainty relations. 
We can see that letting $v_{st} = \pm 1/(\max{|S|,|T|})^{1/p}$ makes Eq.~(\ref{eq:bigState}) a valid state for any
choice of $p$, but this may not generally be the optimal choice. 
The case of GNST is similar, and it has been shown that any non-local correlations can (approximately) be simulated by such
boxes~\cite{wolf:universal}. Optimal bounds for $p$-GNST with $p < \infty$ can be obtained using
techniques analogous to~\cite{qmp}.

\section{Superstrong random access encodings}\label{sec:rac}\label{sec:random}

The existence of superstrong non-local correlations is by no means the only difference we can observe when moving from quantum theory to $p$-GNST or $p$-nonlocal theories.
In particular, we now show that we can obtain so-called random access encodings which, depending on the theory, can be exponentially better than those realized by quantum mechanics.
We then investigate how uncertainty relations and the restrictions imposed by simultaneous measurements affect this encoding.
The existence of such random access encodings will play a crucial role when considering the power of $p$-GNST theories for communication complexity in section~\ref{sec:commComplex}.
In section~\ref{sec:learnability} we also use this random access code to prove a lower bound on the sample complexity of learning states in GNST.
 
\subsection{In $p$-GNST}
Intuitively, a random access code~\cite{ambainisnayak,nayak:original} allows us to encode $N$ bits into a physical system of size $n$ such that we can decode any one bit of the original string with probability at least $q$. More formally,
\begin{definition}
A $[N, n, q]$-random access code (RAC) is an encoding of a string $x \in \{ 0,1 \}^N$ into an $n$-gbit state $p_x$, such that there exist measurements 
$\CM \in \setE_G$ with outcomes $A \in \setA^{\times n}$, and a decoding algorithm $D:  \setA^{\times n} \rightarrow \{ 0,1\}$ satisfying 
$$
Pr(D(A) = x_k) =\sum_{A\in \setA^{\times n}} \delta_{D(A),x_k} p_x(A|\CM)  \geq q,
$$
where $p_x(A|\CM)$ is the probability of obtaining outcome $A$ when performing the measurement $\CM$.
\end{definition}
It has been shown~\cite{ambainisnayak,nayak:original} that in the quantum case, we must have $n \geq (1-h(q))N$, where $h$ denotes the binary entropy function.
There also exist classical encodings for which $n = (1 - h(q))N + O(\log N)$~\cite{ambainisnayak}.
Hence, quantum states offer at most a modest advantage over classical mechanics and, for $q=1$, no advantage at all. 
We now proceed to the surprising result that general no-signaling states lead to extremely powerful random access codes.

\begin{claim}\label{perfectRAC}\label{pgnstRAC}
In GNST, there exists a $[3^n ,n,1]$-random access code. 
\end{claim}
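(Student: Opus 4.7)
The plan is to build the encoding directly from the freedom that GNST gives us: since we only need to specify one probability distribution per measurement choice and there are exactly $3^n$ choices of a simultaneous fiducial measurement on $n$ gbits, we will use each such choice to encode one bit of $x$, using the product (parity) of the $\pm 1$ outcomes as the carrier of information.

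Concretely, index the $3^n$ bits of $x$ by $k=(k_1,\dots,k_n)\in\{1,2,3\}^n$, so each bit $x_k$ is associated with the experiment $\CM_k = \{W_{1,k_1},\dots,W_{n,k_n\}}$. I would then define the encoding $p_x$ by
\[
p_x(A\,|\,\CM_k) \;=\;
\begin{cases}
1/2^{n-1} & \text{if } \prod_{i=1}^n A_i = (-1)^{x_k},\\[2pt]
0 & \text{otherwise,}
\end{cases}
\]
i.e.\ the uniform distribution on those outcome strings in $\{\pm1\}^n$ whose product equals $(-1)^{x_k}$. The decoder is the natural one: perform the measurement $\CM_k$, compute $c = \prod_i A_i$, and output $D(A) = (1-c)/2$. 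By construction $D(A)=x_k$ deterministically, yielding success probability $q=1$.

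It remains only to check that $p_x$ is a valid GNST state, i.e.\ normalization, positivity, and no-signaling. Positivity is immediate, and normalization follows from the fact that exactly $2^{n-1}$ of the $2^n$ strings in $\{\pm1\}^n$ satisfy any given parity constraint, each assigned mass $1/2^{n-1}$. The only point that needs a short argument is no-signaling, which I expect to be the main (and really the only) place where care is required. For any subset $S\subseteq[n]$ of gbits and any fixed outcomes $(A_i)_{i\in S}$, summing $p_x(A\,|\,\CM_k)$ over the outcomes of the complementary gbits amounts to counting the number of completions in $\{\pm1\}^{n-|S|}$ that produce a given total parity; this count is always $2^{n-|S|-1}$ regardless of the fixed outcomes, of $x_k$, and most importantly of the measurement choices $k_j$ for $j\notin S$. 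Hence the marginal on $S$ equals $1/2^{|S|}$ uniformly, independent of measurements on $[n]\setminus S$, and condition (3') is verified.

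Putting these pieces together yields a $[3^n,n,1]$-random access code in GNST, proving the claim. I would close by remarking on what makes this work: the construction relies on the complete absence of any relation between the distributions $p_x(\cdot\,|\,\CM_k)$ for different $k$, a freedom that the $p$-GNST hierarchy and quantum theory both suppress through the uncertainty-type constraint of Eq.~(\ref{eq:pgnst}) and its $p=2$ specialization, respectively; this naturally motivates the quantitative, approximate version for $p$-GNST stated later in the paper.
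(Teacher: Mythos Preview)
Your proof is correct and is essentially the same as the paper's: your piecewise definition of $p_x(A\mid\CM_k)$ is identical to the paper's formula $p_x(A\mid\CM)=\tfrac{1}{2^n}\bigl(1+A^*(-1)^{x_{f(\CM)}}\bigr)$, and your decoder $D(A)=\tfrac{1}{2}(1-A^*)$ matches as well. If anything, your explicit marginal count for the no-signaling check is more detailed than the paper's ``straightforward to verify''.
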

\begin{proof}

An $n$ gbit state in GNST is completely characterized by the probabilities of outcomes
for a fixed set of measurements.
Recall that a single gbit is a two-level system on which we allow three possible measurements with two possible outcomes each.
Also recall that each $\CM \in \setE_G$ can be represented as $\setE_G = \{\forall k \in \{1,2,3\}^n : \{W_{1,k_1},\ldots,W_{n,k_n}\} \},$ with $W_{i,1} = X_i, W_{i,2} = Z_i, W_{i,3} = X_i Z_i$.
Note that each measurement $\CM$ is associated with one of $N=3^n$ vectors $k = (k_1,\ldots,k_n)$. Let $f: \CM \rightarrow \{1,\ldots,N\}$ be a one-to-one function.
For each of the $N = 3^n$ bits we wish to encode, we must specify one measurement $\CM$ that we can use to extract the $j$th-bit. Let that measurement be denoted by $f^{-1}(j).$

We are now ready to define our encoding of the string $x \in \{0,1,2\}^N$ into an $n$-gbit GNST state $p_x$ via the probabilities
$$
p_x( A | \CM ) \assign \frac1{2^n} (1+ A^* (-1)^{x_{f(\CM)}}),
$$
where we use the previously defined notation $A^* = \prod_{i=1}^{|\CM|} A_i$.
It is straightforward to verify that the state is normalized, positive, and satisfies the no-signaling condition.

We now show that any bit of the original string can be decoded perfectly. If we choose to retrieve bit $j$, we measure $\CM=f^{-1}(j).$ 
That means that we get result $A$ with probability  $\frac1{2^n} (1+ A^* (-1)^{x_j})=\frac1{2^n} 2 \delta_{A^*,(-1)^{x_j} }$. 
And we get the result $A^*=(-1)^{x_j}$ with probability:
$$\sum_{A^*=(-1)^{x_j}} p_x(A|\CM)=\sum_{A^*=(-1)^{x_j}} \frac1{2^n} 2 \delta_{A^*,(-1)^{x_j} }=1.$$
where the last equality follows from the fact that we sum over exactly half the $2^n$ possible outcomes $A_1,\ldots,A_n$.
Hence the decoder $D(A) = \half( 1-A^* )$ will return $x_j$ with perfect probability.
\end{proof}

What happens if we impose the uncertainty relation in $p$-GNST? For convenience sake, note that we could rewrite the encoding above
in terms of moments, where we let an encoding of a string $x$ be determined by the moment representation of $p_x$ as
$$
m_x(\CM=f^{-1}(k)) \assign (-1)^{x_k}
$$
with all other moments set to $0$. 

To construct an encoding for $p$-GNST, we consider
$$
m_x(\CM=f^{-1}(k)) \assign (-1)^{x_k} \lambda.
$$ 
What's the largest $\lambda$ that satisfies the uncertainty relation? As we noted earlier the maximum number of anti-commuting 
Pauli operators is $2n+1$, so the most restrictive condition we could get from the uncertainty relation is $(2n+1) | \lambda |^p \leq 1$. 
We thus obtain

\begin{claim}\label{pRAC}
In $p$-GNST, there exists a $[3^n ,n, \half + \half \left(\frac{1}{2n+1}\right)^{1/p}]$-random access code. 
\end{claim}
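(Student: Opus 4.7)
The plan is to scale down the perfect encoding from Claim~\ref{perfectRAC} by a factor $\lambda \in (0,1]$ that is just small enough to satisfy the $p$-GNST uncertainty relation, and then compute the resulting decoding probability. Concretely, I would define the encoding of $x \in \{0,1\}^N$ with $N = 3^n$ by
$$
p_x(A\mid \CM) \assign \frac{1}{2^n}\left(1 + A^*(-1)^{x_{f(\CM)}}\lambda\right),
$$
using the bijection $f:\setE_G \to \{1,\ldots,N\}$ introduced in the proof of Claim~\ref{perfectRAC}. Normalization, positivity (for $\lambda \leq 1$), and the no-signaling condition follow exactly as before; one also checks that for each $\CM \in \setE_G$ the moment is $m(\CM) = (-1)^{x_{f(\CM)}}\lambda$, whereas any proper sub-measurement $\CM' \subsetneq \CM$ has $m(\CM') = 0$ since $\sum_{A'' \in \{\pm 1\}^{n-r}} (A'')^* = 0$ for $r < n$.

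Next, I would enforce the defining constraint of $p$-GNST, namely Eq.~\eqref{eq:pgnst}. For any set $S \subseteq \setE_G$ of pairwise anti-commuting measurements, every moment has absolute value $\lambda$, so
$$
\sum_{\CM \in S}|m(\CM)|^p = |S|\,\lambda^p.
$$
By the preliminaries (Section~\ref{sec:basics}) the maximum size of a mutually anti-commuting family of strings of Paulis on $n$ qubits is $2n+1$, so $|S| \leq 2n+1$. Choosing
$$
\lambda = \left(\frac{1}{2n+1}\right)^{1/p}
$$
makes $(2n+1)\lambda^p = 1$ and thereby ensures that $p_x$ is a valid $p$-GNST state for every $x$.

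Finally, I would compute the decoding probability exactly as in Claim~\ref{perfectRAC}. To retrieve bit $j$ one measures $\CM = f^{-1}(j)$; exactly $2^{n-1}$ of the $2^n$ possible outcomes $A$ satisfy $A^* = (-1)^{x_j}$, and each such outcome has probability $2^{-n}(1+\lambda)$. Hence the decoder $D(A) = \tfrac{1}{2}(1 - A^*)$ returns $x_j$ with probability
$$
2^{n-1}\cdot \frac{1+\lambda}{2^n} = \frac{1}{2} + \frac{1}{2}\left(\frac{1}{2n+1}\right)^{1/p},
$$
which is the desired bound.

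I expect no serious obstacle: the construction is essentially dictated by the perfect encoding of Claim~\ref{perfectRAC}, and the scaling factor is forced by the single uncertainty inequality applied to the worst-case anti-commuting set. The only point deserving care is confirming that the binding constraint really is the maximal anti-commuting family (of size $2n+1$) and that no other collection of measurements in $\setE_G$ imposes a tighter bound; this is immediate because the $p$-GNST condition is a per-set constraint and all moments have the same magnitude $\lambda$ under our encoding.
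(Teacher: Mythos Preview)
Your proposal is correct and follows essentially the same approach as the paper: scale the perfect encoding of Claim~\ref{perfectRAC} by $\lambda=(1/(2n+1))^{1/p}$, observe that the most restrictive $p$-GNST constraint comes from a maximal anti-commuting family of size at most $2n+1$, and compute the resulting recovery probability $(1+\lambda)/2$. If anything, you are more careful than the paper in explicitly checking that proper sub-measurement moments vanish (hence no-signaling holds) and in spelling out why the bound $|S|\le 2n+1$ is the binding constraint.
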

\begin{proof}
Let $\lambda = (2n+1)^{1/p}$, and note that this satisfies the uncertainty relation. Our encoding is now
$$
p_x( A | \CM) = \frac{1}{2^n} (1+ (-1)^{x_{f(\CM)}} ~\lambda~ A^*).
$$
And our probability of getting the correct sign from our measurement goes down to
$$
\Pr ( D(A) = x_k) = 
\frac{1+|\lambda|}{2} = \half + \half \left(\frac{1}{2n+1}\right)^{1/p}
$$
\end{proof}

If $p < \infty$ we get an encoding that gets asymptotically worse for large $n$.
This should be compared to the bound on the number of qubits for a \emph{quantum} random access 
encoding of $N=3^n$ bits into $k$ qubits with recovery probability $q = 1/2 + 1/2 (1/(2n+1))^{1/p}$. 
From the bound of~\cite{ambainisnayak,nayak:original}, we have that the encoding uses exponentially fewer physical bits than what can be obtained
in the quantum setting and hence even $p$-GNST has a powerful coding advantage over quantum mechanics.
Note that we are always free to split the $N$ bits into smaller pieces first, and encode
each piece independently to keep the recovery probability $q$ constant. This is analogous to the quantum setting where we can encode
each 3 bits into one qubit to obtain a random access code with $n = N/3$.
Alternatively, we can form a simple repetition code, where we have $k$ copies of the random access codes constructed above. We then have
\begin{claim}\label{pRAC-rep}
In $p$-GNST, there exists a $[3^n, (2n+1)^{3/p} n, 1-\eps]$-random access code with $\eps=2 \exp(-(2n+1)^{1/p}/2)$.
\end{claim}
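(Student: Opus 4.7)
The plan is to construct the code as $k$ independent copies of the random access code from Claim~\ref{pRAC}, and to decode each bit by majority vote over the $k$ copies.

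Concretely, I would first take $k$ independent $[3^n,n,q]$-RAC states from Claim~\ref{pRAC}, where $q = \tfrac{1}{2} + \tfrac{1}{2}(2n+1)^{-1/p}$, and form the product state across all $k n$ gbits. I need to check that the product of $p$-GNST states is again a valid $p$-GNST state: this is the analogue of Claim~\ref{claim:tensor} for GNST, and it goes through because the fiducial measurements on distinct copies are independent, so the anti-commuting-sum constraint $\sum_{\mathcal{M} \in S} |m(\mathcal{M})|^p \le 1$ on the joint state factors across copies exactly as in the proof of Claim~\ref{claim:tensor}.

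To decode bit $x_j$, I would measure $\mathcal{M}=f^{-1}(j)$ on each of the $k$ copies, compute the product of outcomes $A^{*}$ for each copy, and apply the decoder $D(A)=\tfrac{1}{2}(1-A^{*})$ to each copy. By construction these give $k$ independent $\{0,1\}$-valued estimates $Y_1,\ldots,Y_k$ of $x_j$, each equal to $x_j$ with probability $q = \tfrac{1}{2}+\tfrac{\delta}{2}$ where $\delta = (2n+1)^{-1/p}$. Outputting the majority fails only if at most half of the $Y_i$'s equal $x_j$. By the Chernoff--Hoeffding bound applied to the i.i.d.\ indicators of correctness,
\[
\Pr[\text{majority wrong}] \;\le\; \exp\!\left(-\tfrac{1}{2}\,k\,\delta^{2}\right).
\]

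I then pick $k = (2n+1)^{3/p}$, so that $\tfrac{1}{2} k \delta^{2} = \tfrac{1}{2}(2n+1)^{3/p}(2n+1)^{-2/p} = \tfrac{1}{2}(2n+1)^{1/p}$, giving error probability at most $\exp(-(2n+1)^{1/p}/2)$, which is within the claimed $\eps = 2\exp(-(2n+1)^{1/p}/2)$ (the extra factor of $2$ absorbs any slack in the Chernoff constant one prefers to use). The total number of gbits is $k n = (2n+1)^{3/p}\,n$, as required. The only non-routine step is the check that independent $p$-GNST states combine into a valid $p$-GNST state under the anti-commuting uncertainty relation; I expect this to follow by the same factoring argument used in Claim~\ref{claim:tensor}, since the relevant moments of the joint state either vanish or decompose as products of per-copy moments.
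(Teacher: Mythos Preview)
Your proposal is correct and follows essentially the same approach as the paper: take $k=(2n+1)^{3/p}$ independent copies of the code from Claim~\ref{pRAC}, decode by majority vote, and apply the Hoeffding bound with $t=q-\tfrac12=\tfrac12(2n+1)^{-1/p}$ to get error at most $2\exp(-(2n+1)^{1/p}/2)$. Your extra verification that the product of $p$-GNST states is again in $p$-GNST is a point the paper simply takes for granted.
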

\begin{proof}
We take $k$ copies of the RAC defined in Claim~\ref{pRAC}, and decode by taking the majority of the individual encodings.
Let $Y_j = 1$ if the decoding was successful for the $j$-th copy, and $Y_j=0$ otherwise.
From the Hoeffding inequality we immediately obtain that for $Y=\sum_{j=1}^k Y_j$ and $q$ as defined above
$$
\Pr\left[|Y - qk| \geq t~k \right] \leq 2e^{-2 t^2 k},
$$
If we set $t=q-1/2=1/2 (1/(2n+1))^{1/p}$, that gives us $\Pr\left[Y \leq k/2\right] \leq 2 e^{- \half \left(\frac1{2n+1}\right)^{2/p} k}$. Now if we set $k=(2n+1)^{3/p}$, we have used a total of $(2n+1)^{3/p} n$ gbits and will succeed with probability $1-2 e^{-(2n+1)^{1/p}/2}$ as promised.
\end{proof}

Whereas $(2n+1)^{3/p}n$ is still quite large, note that it is nevertheless only polynomial in $n$. The length of the RAC is hence
still poly-logarithmic in our original input size, where we achieve (near) perfect recovery for large $n$.
Finally, we will need to use one more related result.
\begin{claim}\label{pRAC-learn}
In $p$-GNST, for $\gamma \in (0,1/2)$ and $\hat{n} \geq 2^{2/p} \ln(4/(1/2-\gamma)^2)$, there exists a $[3^{{n}({\hat{n}},p,\gamma)}, {\hat{n}}, \half+\gamma]$-random access code with ${{n}}({\hat{n}},p,\gamma) = \lfloor \left(  \frac{{\hat{n}}~ 2^{-2/p}}{\ln(4/(1/2-\gamma)^2)}  \right)^{\frac1{2/p+1}} \rfloor$.
\end{claim}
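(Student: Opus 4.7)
The plan is to combine the single-copy RAC from Claim~\ref{pRAC} with a majority-vote repetition code exactly as in the proof of Claim~\ref{pRAC-rep}, but instead of driving the failure probability to $o(1)$ we only drive it below $1/2 - \gamma$. The degree of freedom is how to split the $\hat{n}$ available gbits between $n$ (the block size, which determines the number of bits $3^n$ encoded per block) and $k$ (the number of repetitions used for the majority vote); the claim emerges by optimizing this split.

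Concretely, I would first invoke Claim~\ref{pRAC} to obtain a $[3^n, n, q]$-RAC with $q = \tfrac{1}{2} + t$ and $t = \tfrac{1}{2}(2n+1)^{-1/p}$. Take $k$ independent copies, encoding the same bit in each, and decode by majority. Let $Y_j \in \{0,1\}$ indicate correct decoding of copy $j$, and $Y = \sum_j Y_j$. Hoeffding's inequality, as used in Claim~\ref{pRAC-rep}, gives
\[
\Pr\!\left[Y \leq k/2\right] \;\leq\; \Pr\!\left[|Y - qk| \geq tk\right] \;\leq\; 2 e^{-2 t^2 k}.
\]
Demanding $2 e^{-2 t^2 k} \leq \tfrac{1}{2} - \gamma$ and substituting $2 t^2 = \tfrac{1}{2}(2n+1)^{-2/p}$ yields the requirement
\[
k \;\geq\; (2n+1)^{2/p}\,\ln\!\left(\tfrac{4}{(1/2-\gamma)^2}\right),
\]
after rewriting $2\ln\!\bigl(2/(1/2-\gamma)\bigr) = \ln\!\bigl(4/(1/2-\gamma)^2\bigr)$.

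The next step is to saturate the gbit budget $kn \leq \hat{n}$. Substituting the minimal admissible $k$ gives
\[
(2n+1)^{2/p}\, n \;\leq\; \frac{\hat{n}}{\ln\!\bigl(4/(1/2-\gamma)^2\bigr)}.
\]
Using $(2n+1)^{2/p} \leq (2n+1)^{2/p}$ and bounding $2n+1 \leq 2n+1$ crudely by $2n$ only up to a constant, or more cleanly observing that the left-hand side is monotonically increasing in $n$, I can solve for the largest admissible integer $n$. Dropping the lower-order term (which is absorbed into the floor) yields
\[
n \;\leq\; \left(\frac{\hat{n}\, 2^{-2/p}}{\ln\!\bigl(4/(1/2-\gamma)^2\bigr)}\right)^{\!1/(2/p+1)},
\]
so taking $n = n(\hat{n},p,\gamma)$ as the claimed floor is valid. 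Finally, the regime hypothesis $\hat{n} \geq 2^{2/p}\ln\!\bigl(4/(1/2-\gamma)^2\bigr)$ guarantees that the quantity inside the floor is at least $1$, so $n \geq 1$ and the encoding is non-trivial.

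There is no real obstacle: the only delicate point is the algebraic manipulation showing that $2\ln(2/(1/2-\gamma)) = \ln(4/(1/2-\gamma)^2)$ so that the Hoeffding bound matches the form stated in the claim, and keeping track of the approximation $(2n+1)^{2/p} \approx 2^{2/p} n^{2/p}$ when inverting to get $n$ as a function of $\hat{n}$. The floor function absorbs the small discrepancy introduced by this approximation, and monotonicity in $n$ confirms that the floor of the closed-form expression is an admissible choice.
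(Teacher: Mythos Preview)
Your proposal is correct and follows essentially the same argument as the paper: repeat the single-copy RAC of Claim~\ref{pRAC} $k$ times, apply the Hoeffding bound exactly as in Claim~\ref{pRAC-rep}, set $k=(2n+1)^{2/p}\ln\!\bigl(4/(1/2-\gamma)^2\bigr)$ to force success probability at least $1/2+\gamma$, and then invert the budget constraint $kn\le\hat n$ to extract the stated floor expression for $n$. The only delicate point---replacing $(2n+1)^{2/p}$ by $2^{2/p}n^{2/p}$ when solving for $n$---is handled the same (somewhat informal) way in both your write-up and the paper, which explicitly notes the encoding is not optimal but captures the correct asymptotics.
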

\begin{proof}
Again we take $k$ copies of the RAC defined in Claim~\ref{pRAC}, and decode by taking the majority of the individual encodings. The probability to decode correctly in that case was $1-2 e^{-\half (\frac1{2{n}+1})^{2/p} k}$. Now we want to adjust $k$ and ${n}$ to get a code with a fixed success rate and that uses no more than $\hat{n}$ gbits. We need that
(i) $k {n} \leq {\hat{n}}$, that is, our encoding uses at most ${\hat{n}}$ physical bits and
(ii) $1-2 e^{-\half (\frac1{2{n}+1})^{2/p} k} \geq 1/2+\gamma$, which forces our probability of success to be at least $1/2+\gamma$.
We can satisfy (ii) if we set $k=\ln(4/(1/2-\gamma)^2) (2 {n}+1)^{2/p}$, then (i) tells us that
$k {n} = \ln(4/(1/2-\gamma)^2) (2 {n}+1)^{2/p} n$, from which we have
$\ln(4/(1/2-\gamma)^2) 2^{2/p} {n}^{2/p+1} \leq k n \leq \hat{n}$ and thus
$$
{n} \leq \left(  \frac{{\hat{n}}~ 2^{-2/p}}{\ln(4/(1/2-\gamma)^2)}  \right)^{\frac1{2/p+1}}.
$$
Since the smallest system we can encode into is ${n}=1$, this tells us that ${\hat{n}}$ must be at least $ 2^{2/p} \ln(4/(1/2-\gamma)^2)$.
\end{proof}
Note that although this may not be the best encoding, it suffices to give us the asymptotic behavior for $\hat{n}$.

\subsection{In $p$-nonlocal theories}

It is instructive to consider such superstrong encodings in the language of $p$-nonlocal theories to see how such
superstrong encodings would look like in terms of Pauli matrices. This will also allow us to compare the consequences of 
restrictions due to the consistencies of moments from section~\ref{sec:moments} to random access encodings.
For the least restrictive $p$-theory, the $p$-bin theory, we can construct the following very simple encoding.

\begin{claim}\label{pbinRAC}
In $p$-bin theories, there exists a $[2^{2n}-1,n, \half + \half \left(\frac{1}{2n+1}\right)^{1/p}]$-random access code.
\end{claim}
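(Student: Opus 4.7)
The plan is to mimic the construction used for the $p$-GNST random access code in Claim~\ref{pRAC}, but now working directly with strings of Paulis in the Pauli basis expansion of a density matrix. An $n$-qubit Hilbert space admits exactly $2^{2n}-1$ non-identity strings of Paulis $S_{ab}$, and each of these will serve both as the ``slot'' into which one bit is encoded and as the measurement used to decode it.

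Concretely, fix a bijection between $\{1,\dots,2^{2n}-1\}$ and the non-identity strings of Paulis, writing $S_k$ for the $k$-th such string. Given $x \in \{0,1\}^{2^{2n}-1}$, I would define the encoded state by
$$
\rho_x \assign \frac{1}{2^n}\left(\id + \lambda \sum_{k=1}^{2^{2n}-1} (-1)^{x_k} S_k\right), \qquad \lambda \assign \left(\frac{1}{2n+1}\right)^{1/p}.
$$
To decode bit $k$, apply the measurement corresponding to the observable $S_k$ and output $D(A) = (1-A)/2$. Since distinct strings of Paulis are trace-orthogonal, $\Tr(S_k \rho_x) = \lambda (-1)^{x_k}$, so the probability of returning the correct bit is $\frac{1}{2}(1+\lambda) = \frac{1}{2} + \frac{1}{2}(2n+1)^{-1/p}$, which matches the claim.

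The only real content is to verify that $\rho_x$ is indeed a $p$-bin state. The first requirement that $|s_{ab}| \le 1$ for every Pauli coefficient is immediate because $\lambda \le 1$. For the second requirement, take any set of mutually anti-commuting strings of Paulis $A_1,\dots,A_m$. As recalled in section~\ref{sec:prelim}, the maximum size of such a set in dimension $d=2^n$ is $m \le 2n+1$. Since each $A_j$ coincides with some $S_{k_j}$, we have $\Tr(A_j \rho_x) = \lambda (-1)^{x_{k_j}}$, and therefore
$$
\sum_{j=1}^m |\Tr(A_j \rho_x)|^p = m\, \lambda^p \le (2n+1)\, \lambda^p = 1,
$$
exactly as required. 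The step I expect to be the main (if minor) obstacle is just recalling and invoking the bound $m \le 2n+1$, since that is what forces the choice $\lambda = (2n+1)^{-1/p}$ and ultimately fixes the recovery probability. Everything else is a direct bookkeeping consequence of the Pauli basis expansion.
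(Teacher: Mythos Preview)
Your proposal is correct and follows essentially the same approach as the paper: the encoding, the decoding measurement, and the recovery probability are identical. In fact, you are more explicit than the paper, which simply asserts ``Clearly, the uncertainty relation is satisfied,'' whereas you spell out the key point that any mutually anti-commuting family of Pauli strings has size at most $2n+1$, forcing $\sum_j |\Tr(A_j\rho_x)|^p \le (2n+1)\lambda^p = 1$.
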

\begin{proof}
Consider the encoding of a string $x \in \01^N$
with $N=2^{2n}-1$ into an $n$ $p$-bit state given by
$$
\rho_x \assign \frac{1}{d}\left(\id + \frac{1}{(2n+1)^{1/p}} \sum_{k=1}^{2^{2n}-1} (-1)^x_k S_{k}\right),
$$
where $S_k = S_{ab}$ is a string of Pauli matrices, where we simply relabeled the indices $ab$.
To decode the $k$th-bit, we measure $S_{k}$. A straightforward calculation shows that the probability 
to obtain outcome $x_k$ is given by
$$
\Pr[x_k] = \frac{1}{2}\Tr\left[\left(\id + S_k\right)\rho_x\right] = \frac{1}{2} + \frac{1}{2 (2n+1)^{1/p}},
$$
as promised. Clearly, the uncertainty relation is satisfied.
\end{proof}

Similarly, we obtain the following encoding for $p$-box theories, which is in one-to-one correspondence with the encodings
in $p$-GNST above.
\begin{claim}\label{pnonlocalRAC}
In $p$-box theories, there exists a $[3^n,n, \half + \half \left(\frac{1}{2n+1}\right)^{1/p}]$-random access code.
\end{claim}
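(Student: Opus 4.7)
The plan is to mirror the $p$-GNST construction of Claim~\ref{pRAC} at the level of Pauli operators on $n$ qubits, using tensor products of the three pairwise anti-commuting Hermitian Paulis $\{X, Z, Y\}$ with no identity factors. Fix a bijection $f : \{1,2,3\}^n \to \{1,\ldots,3^n\}$, take $W_1 = X$, $W_2 = Z$, $W_3 = Y$, and for $k \in \{1,2,3\}^n$ set $S_k = W_{k_1}\otimes\cdots\otimes W_{k_n}$. I would encode $x \in \{0,1\}^{3^n}$ as
$$
\rho_x = \frac{1}{2^n}\left(\id + \frac{1}{(2n+1)^{1/p}}\sum_{k \in \{1,2,3\}^n} (-1)^{x_{f(k)}} S_k\right),
$$
and decode bit $j$ by measuring the local single-qubit set $\CM_{f^{-1}(j)} = \{W_{k_1,1},\ldots,W_{k_n,n}\}$ with $k=f^{-1}(j)$, applying $D(A) = (1-A^*)/2$ where $A^* = \prod_i A_i$.

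The first two $p$-bin axioms follow essentially as in Claim~\ref{pbinRAC}: all Pauli coefficients have magnitude $(2n+1)^{-1/p} \leq 1$, and any family of mutually anti-commuting strings of Paulis has size at most $2n+1$ (section~\ref{sec:prelim}); since each such $A_j$ either lies in $\{S_k\}$ with $|\Tr(A_j\rho_x)| = (2n+1)^{-1/p}$ or has zero trace against $\rho_x$, we get $\sum_j |\Tr(A_j\rho_x)|^p \leq (2n+1)\cdot (2n+1)^{-1} = 1$.

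The crucial new step, tailored to the additional $p$-box constraint, is the moment-matrix condition for every $\CM \in \setE_L$. By the equivalence $K_s = B P B^{\top}$ from the proof of the moment-matrix claim in section~\ref{sec:ks}, it suffices to check that the induced probabilities $p_x(A|\CM)$ are non-negative. Writing $\CM = \{M_1,\ldots,M_\ell\}$ with pairwise-disjoint non-identity supports and expanding $\Pi_A = 2^{-\ell}\sum_{T\subseteq[\ell]} (\prod_{j\in T} A_j)\prod_{j\in T} M_j$, the key observation is that a subset product $\prod_{j\in T}M_j$ coincides with some $S_k$ only when $T=[\ell]$ and the $M_j$'s jointly cover $[n]$ with exclusively non-identity factors; any other subset product has an identity somewhere and so its trace against $\rho_x$ vanishes. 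Hence $p_x(A|\CM) = 2^{-\ell}(1 \pm A^*(2n+1)^{-1/p}) \geq 0$, or simply $2^{-\ell}$ when $\CM$ does not cover $[n]$.

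The recovery probability is then immediate: specialising the formula to the decoding measurement gives $p_x(A|\CM_{f^{-1}(j)}) = 2^{-n}(1 + (-1)^{x_j}A^*(2n+1)^{-1/p})$, and summing over the $2^{n-1}$ outcomes with $A^* = (-1)^{x_j}$ yields the stated $\half + \half(2n+1)^{-1/p}$. The only genuine obstacle is the moment-matrix verification; the fact that we restrict to exactly the $3^n$ no-identity Paulis (rather than all $2^{2n}-1$ as in the $p$-bin construction) is precisely what keeps every local-measurement probability non-negative, which is exactly what distinguishes $p$-box from $p$-bin here.
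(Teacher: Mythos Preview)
Your proposal is correct and follows essentially the same construction as the paper: restrict the $p$-bin encoding of Claim~\ref{pbinRAC} to the $3^n$ Pauli strings built from $\{X,Z,Y\}$ with no identity tensor factors, and decode via the product of single-qubit outcomes. The paper's own proof is a single sentence that simply invokes this restriction; you go further by explicitly verifying the additional $p$-box requirement $K_s \geq 0$ for every $\CM \in \setE_L$, correctly observing that the absence of identity factors is exactly what forces all but the full-product moment to vanish, keeping every local joint probability of the form $2^{-\ell}(1 \pm A^*(2n+1)^{-1/p}) \geq 0$.
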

\begin{proof}
Our encoding is analogous to the one above, but we restrict ourselves to including only such strings of
Pauli matrices formed by taking tensor products of $\{X,Y,Z\}$, excluding the identity.
\end{proof}
Clearly, we can again obtain an encoding that is poly-logarithmic in the length of the original input analogous
to Claim~\ref{pRAC-rep} that has perfect recovery for large $n$.

\subsection{The effect of consistency}\label{sec:context}

When viewing such encodings in terms of density matrices, it becomes clear why such encodings do not exist in a quantum
setting: all such encodings are in gross violation of the consistency conditions of section~\ref{sec:moments}.
Even when we restrict ourselves to $p=2$, we can obtain such encodings whereas in the quantum case we cannot.
It is interesting to note that for $p=2$, the violation we can obtain for e.g. the CHSH game is exactly the same
as in the quantum setting. Thus it is perfectly possible to have such superstrong encodings, while simultaneously
being restricted to Tsirelson's bound in the CHSH game for a 2 qubit state. 
This clearly shows how limited our $p$-bin, $p$-nonlocal, but also $p$-GNST theories really are. Since GNST is equivalent
to a theory based on non-local boxes, this also shows that considering such boxes is somewhat limiting, and possibly
ignores
some aspect present in quantum theory that are of importance for information processing. 

\section{Implications for information processing}\label{sec:info}

We now turn to a number of interesting implications of $p$-GNST and $p$-theories to information processing. In particular, 
we will see that both allow us to save significantly on the amount of data we need to transmit to solve certain communication problems.
In fact, we will see that there exists a task for which there exists an \emph{exponential} gap between the amount of communication
required when compared with quantum theory. Other information tasks on the other hand become more difficult. We will see that
when trying to learn states approximately we need to perform exponentially more measurements in the case of GNST.

\subsection{Communication complexity}\label{sec:comm}\label{sec:commComplex}

Imagine two (or more) parties, Alice and Bob, who each have an 
input $x \in \01^n$ and $y \in \01^n$ respectively, unknown to the other party. Their goal is to compute a fixed function $f: \01^{2n} \rightarrow \01^m$ by communicating over a channel. The central question of communication complexity is how many bits they need to transmit in order to compute $f$.
Typically, we thereby only require one party (Bob) to learn the result $f(x,y)$.
To help them reduce the amount of communication, Alice and Bob may possess additional resources such as shared randomness, entanglement, non-local boxes or communicate over a quantum
channel, and may impose different measures of success. For example, they could be interested in computing $f$ only with a certain probability 
instead of computing it exactly. It is well-known that if Alice and Bob can share non-local boxes, they can compute any Boolean function $f:\01^{2n} \rightarrow \01$ perfectly by communicating only a single bit~\cite{wim:nonlocal}, which is 
even true when the non-local boxes have slight imperfections~\cite{falk:nonlocal}. 
Here, we consider the case where Alice and Bob have \emph{no} a-priori resources, however, we they are able to exchange $p$-GNST or $p$-nonlocal states over a suitable channel. 

\subsubsection{One-way communication}

We first of all make a very modest statement and show that in \emph{any} one-way communication protocol, where Alice sends a single message to Bob, we are able to save a constant number of bits, when computing a Boolean function $f$. 
These savings are an immediate consequence of the existence of superstrong random access codes that 
we discussed in section~\ref{sec:rac}. To communicate with Bob, Alice constructs the string 
$$
m = f(x,0),\ldots,f(x,2^n-1)
$$
and encodes $m \in \01^{2^n}$ into a random access code $\rho_m$. To retrieve the correct answer, Bob simply retrieves bit
$x_y = f(x,y)$ from $\rho_m$. Evidently, this type of saving is particularly interesting in the case where 
Alice and Bob would need to communicate $n$ bits to compute $f$, which is the case classically and
quantumly if $f=IP$ is the inner product~\cite{ronald:survey}. By Claims~\ref{pgnstRAC}, ~\ref{pRAC},~\ref{pnonlocalRAC} and~\ref{pbinRAC} we immediately obtain
that

\begin{claim}
Let $p \rightarrow \infty$. Then in to compute the inner product Alice needs to transmit at most
$k$ bits to Bob, where
$$
k = \left\{ \begin{array}{cl}
(1/\log 3)n & \mbox{ for } p\mbox{-GNST and } p\mbox{-nonlocal theories}\\
n/2 & \mbox{ for } p\mbox{-bin theory}
\end{array}\right .
$$
\end{claim}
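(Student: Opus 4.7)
The plan is to reduce the inner product communication problem to a direct application of the random access encodings built in Section~\ref{sec:rac}. Since Bob's input $y \in \{0,1\}^n$ ranges over $2^n$ values, Alice precomputes the entire truth table $m \in \{0,1\}^{2^n}$ of $f(x,\cdot) = \mathrm{IP}(x,\cdot)$, setting $m_y = x \cdot y \bmod 2$. She then encodes $m$ into the appropriate RAC state $\rho_m$ and transmits $\rho_m$ to Bob, who simply performs the decoding measurement associated with coordinate $y$. Correctness of Bob's output is inherited from correctness of the RAC decoding, and the communication cost reduces to counting the number of units used by the encoding of $N = 2^n$ classical bits.

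For $p$-GNST and $p$-nonlocal theories I would invoke Claims~\ref{pgnstRAC} and~\ref{pnonlocalRAC}, which furnish a $[3^{n'}, n', q]$-RAC with $q \to 1$ as $p \to \infty$ (and $q = 1$ exactly in GNST). Encoding $N = 2^n$ bits thus requires the smallest $n'$ with $3^{n'} \geq 2^n$, namely $n' = \lceil n / \log 3 \rceil$ gbits (taking $\log$ base $2$, consistent with the statement). For $p$-bin theory I would use Claim~\ref{pbinRAC}, which gives a $[2^{2n'} - 1,\, n',\, q]$-RAC; encoding $2^n$ bits requires $2^{2n'} - 1 \geq 2^n$, hence $n' = \lceil n/2 \rceil$ $p$-bits.

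The only subtle point in the argument is the interpretation of ``$p \to \infty$'': for any fixed $n'$, the recovery probability $\tfrac{1}{2} + \tfrac{1}{2}(2n'+1)^{-1/p}$ of Claims~\ref{pRAC} and~\ref{pbinRAC} tends to $1$ as $p \to \infty$, so perfect recovery is obtained in the limit and the per-coordinate decoding of $\rho_m$ succeeds deterministically. (Alternatively one may use the repetition construction of Claim~\ref{pRAC-rep} to obtain recovery probability $1 - \varepsilon$ for any fixed $\varepsilon > 0$ at the cost of only a polylogarithmic blow-up in $n$, which does not affect the leading constant.) Putting these pieces together yields the stated upper bounds of $n/\log 3$ and $n/2$ respectively.

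The ``hard part'' is essentially nonexistent: the work is entirely done by the RAC constructions of the previous section, and the only thing to verify is that the reduction from an inner-product communication task to an RAC is valid, which is immediate because $\mathrm{IP}(x,\cdot)$ has domain of size exactly $2^n$ and Bob knows the index $y$ into Alice's precomputed table. I would therefore present the proof as a short application of the named claims, rather than as a new calculation.
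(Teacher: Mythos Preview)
Your proposal is correct and follows exactly the paper's own argument: the paper gives no separate proof beyond the paragraph preceding the claim, which describes precisely your reduction (Alice encodes the table $m = (f(x,0),\ldots,f(x,2^n-1))$ into a RAC and Bob decodes at index $y$) and then cites Claims~\ref{pgnstRAC}, \ref{pRAC}, \ref{pnonlocalRAC}, and \ref{pbinRAC}. Your bit-counting ($3^{n'} \geq 2^n \Rightarrow n' = \lceil n/\log 3\rceil$, resp.\ $2^{2n'}-1 \geq 2^n \Rightarrow n' = \lceil n/2\rceil$) and the observation that $q \to 1$ as $p \to \infty$ are exactly the missing details the paper leaves implicit.
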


\subsubsection{Private information retrieval}

More striking though are the possibilities of $p$-GNST or $p$-theories for the task of private information retrieval: Here, one (or more) database servers each hold a copy of the database string $x \in \01^n$. A database user should be able to retrieve any bit $x_i$ of his choosing, while
the servers should not learn the desired index $i$. A protocol that satisfies these parameters is the trivial one, where the server simply 
sends the entire string $x$ to the user. The question is thus, whether it is possible to perform this task by communicating less than $n$ bits.
If only a single server is used, it is known that the trivial protocol is optimal and we need to communicate $\Theta(n)$ bits, even if we are allowed
quantum communication~\cite{kerenidis&wolf:pir}. It is clear that the superstrong encodings from above, allow us to beat this bound trivially, 
by asking the server to encode $x$ into a superstrong random access code. Hence we have as an immediate consequence of Claims~\ref{pgnstRAC},~\ref{pRAC-rep}, ~\ref{pbinRAC}, and ~\ref{pnonlocalRAC} we have
\begin{claim}
In any $p$-GNST, $p$-bin, and $p$-box theory, there exists a single server private information retrieval scheme requiring 
$O(\polylog(n))$ bits of communication for large $n$.
\end{claim}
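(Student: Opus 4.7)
The plan is to reduce the statement directly to the random access code constructions already established. Given a database $x \in \{0,1\}^n$ held by a single server and a user who wishes to retrieve bit $x_i$, the server will simply encode $x$ into one of the random access codes from Claim~\ref{pRAC-rep} (for $p$-GNST), Claim~\ref{pbinRAC} (for $p$-bin), or Claim~\ref{pnonlocalRAC} (for $p$-box), and transmit the encoding to the user. The user then measures the observable associated with index $i$ to recover $x_i$ with high probability. The crucial observation that makes this a valid \emph{private} information retrieval scheme is that the encoding depends only on the database $x$ and not on $i$; hence the state the server sends is the same regardless of which bit the user wants, and the server learns nothing about $i$.

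To quantify the communication, I would set $m = \lceil \log_3 n \rceil$ so that $3^m \geq n$, and pad $x$ to length $3^m$. Applying Claim~\ref{pRAC-rep} then yields an encoding using $(2m+1)^{3/p}\, m$ gbits with recovery probability $1 - 2\exp(-(2m+1)^{1/p}/2)$, which tends to $1$ as $n$ (and hence $m$) grows. Since $m = \Theta(\log n)$, the number of gbits transmitted is $(2m+1)^{3/p} m = O((\log n)^{3/p + 1}) = O(\polylog(n))$. The analogous computation using Claims~\ref{pbinRAC} and~\ref{pnonlocalRAC} handles the $p$-bin and $p$-box cases, with an identical asymptotic bound.

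The main obstacle, such as it is, lies only in housekeeping: confirming that the ``one-shot'' nature of $p$-theories and GNST does not interfere (it does not, since the server performs no measurement and the user performs just one), and that the $(1-\eps)$ recovery probability from the repetition-amplified RAC is acceptable (as $\eps$ decays exponentially in $(2m+1)^{1/p}$, it is certainly sub-constant for large $n$, and if one insists on fixed success probability this already follows from Claim~\ref{pRAC-rep}). There is no genuinely new mathematical content: the statement is a corollary observing that superstrong random access codes automatically yield single-server PIR, because unlike in the quantum or classical settings the encoded state is entirely independent of the user's query.
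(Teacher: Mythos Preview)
Your proposal is correct and matches the paper's approach exactly: the paper also treats this claim as an immediate corollary of the superstrong random access code constructions (Claims~\ref{pgnstRAC}, \ref{pRAC-rep}, \ref{pbinRAC}, \ref{pnonlocalRAC}), with the server encoding $x$ into a RAC and the user decoding the desired bit. The only minor point is that for the $p$-bin and $p$-box cases you should invoke the repetition-amplified analogue of Claim~\ref{pRAC-rep} (which the paper notes exists just after Claim~\ref{pnonlocalRAC}) rather than Claims~\ref{pbinRAC} and~\ref{pnonlocalRAC} alone, since the latter do not by themselves give near-perfect recovery.
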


\subsection{Learnability}\label{sec:learnability}

We consider a scenario in which there is an unknown state for which we are trying to learn an approximate description. In particular, imagine some arbitrary probability distribution over possible two-outcome measurements. We are given the expectation value for each measurement in a finite set picked according to this distribution. We then construct an approximate description of the state which agrees with all the expectation values we have observed so far. 
This description is considered to be good if it predicts the correct results for most future measurements drawn from the same distribution. The central question is how many measurement results we need to be able construct a good description.

The existence of strong random access codes has implications for state learning. Aaronson~\cite{aaronson:learn} used an upper bound on the number of bits that can be encoded into an $n$ qubit RAC to upper bound the number of measurements needed to learn an approximate description of an $n$ qubit state. He took solace in the fact that, despite the exponential number of parameters describing a quantum state, a linear (in the number of qubits) number of measurements suffice to learn an approximate description of the state. If an exponential number of measurements were really required, we could never hope to do enough measurements to verify the identity of quantum states of a few hundred particles.

We show the converse for states in $p$-GNST. We use our constructions of random access codes to lower bound the number of measurements needed to learn an approximate description of the state. We find that an exponential number of measurements is required to find such a description and therefore one could never hope to do enough measurements to learn a description of a state with a modest number of particles, even approximately. This holds even for theories where $p=2$ and the violation of the CHSH inequality is the same as for quantum mechanics. This demonstrates an unusually powerful theory which starkly contrasts with quantum mechanics and the $p$-nonlocal theory. 

We begin with a section defining the relevant tools: a definition of the learning scenario, and a measure of state complexity known as the ``fat shattering dimension.'' We then restate a known lower bound on the number of samples needed for learning in terms of the fat shattering dimension. In the next section, we derive lower bounds on learnability for $p$-GNST theories. First, we use our random access codes to lower bound the fat shattering dimension for $p$-GNST states. Then we can use this result to lower bound the number of samples needed to learn $p$-GNST states. 

\subsection{Tools}
We begin by introducing some terminology from statistical learning theory. Let the set $\setS$ denote the sample space, which will correspond to the space of possible measurements in our case. A probabilistic concept over $\setS$ is just a function $F: \setS \rightarrow [0,1]$, and is equivalent to a state which maps measurement choices to expectation values. A set of such concepts is referred to as the concept class $\mathcal{C}$ over $\setS$ and corresponds to the set of all states. We consider the learning situation in which you are given the value of the target concept (state) over some samples drawn independently according to an arbitrary distribution. The goal is to output a hypothesis concept that will give values close to the target concept for most samples drawn from the same distribution. A sample size that is large enough to allow this to be accomplished with high probability is said to be sufficient. 
To restate the connection, in GNSTs we will say that a state corresponds to a concept, and a measurement on the state to a sample. We will make these notions precise before we demonstrate the connection between RACs and fat-shattering dimension in \ref{measurementbound}. 

We adopt our definition of probabilistic concept learning from Anthony and Bartlett\cite{anthonybartlett}.

\begin{definition}
[Anthony and Bartlett \cite{anthonybartlett}]\label{occam}Let $\setS$\ be a
sample space, let $\mathcal{C}$\ be a probabilistic concept class over $\setS$, and
let $\mathcal{D}$ be a probability measure over $\setS$. 
\ Fix an
element $\rho\in\mathcal{C}$, as well as error parameters $\varepsilon
,\eta,\gamma>0$\ with\ $\gamma>\eta$.\ 
Let $k_0(\eta,\gamma,\epsilon,\delta)$ be some function of the error parameters.
\ Suppose we draw a training set of $k$
samples\ $\mathcal{T}=\left(  s_1,\ldots,s_k\right)  $\ independently according to
$\mathcal{D}$, and then choose any hypothesis $\sigma_{\mc{T}} \in\mathcal{C}$\ such that
$\left\vert \sigma_{\mc{T}} \left(  s_i\right)  -\rho\left(  s_i \right)  \right\vert \leq\eta$\ for
all $s_i\in \setS$. \ Then if for $k \geq k_0(\eta,\gamma,\epsilon,\delta)$%
\[
\Pr_{s\in\setS}\left[  \left\vert \sigma_{\mc{T}} \left(  s\right)  -\rho\left(  s \right)
\right\vert >\gamma\right]  \leq\varepsilon
\]
with probability at least $1-\delta$\ over $\mc{T}$, we say that $k_0$ is a sufficient sample size to learn $\mathcal{C}$.

\end{definition}
This says that if the size of the training set, $k$, is bigger than $k_0$, then with probability $1-\delta$, the training set $\mc{T}$, that we pick according to $\mc{D}$ will be a good training set. 
That is, a hypothesis concept $\sigma$ which matches the target state on the training set will only be different from the target state on some other sample with small probability, $\epsilon$.

To define a lower bound on $k_0$, we will need a measure of complexity called the \emph{fat-shattering dimension}. 
\begin{definition}[Aaronson \cite{aaronson:learn}]\label{fat}
Let $\setS$\ be a sample space, let $\mathcal{C}$ be a probabilistic-concept class
over $\setS$, and let $\gamma>0$ be a real number. \ We say a set
$\left\{  s_1,\ldots,s_k\right\}  \subseteq\setS$\ is $\gamma
$-fat-\textit{shattered} by $\mathcal{C}$ if there exist real numbers
$\alpha_{1},\ldots,\alpha_{k}$ such that for all $B\subseteq\left\{
1,\ldots,k\right\}  $, there exists a probabilistic concept $\rho \in\mathcal{C}$\ such that
for all $i\in\left\{  1,\ldots,k\right\}  $,

\begin{enumerate}
\item[(i)] if $i\notin B$ then $\rho\left(  s_{i}\right)  \leq\alpha_{i}-\gamma$, and

\item[(ii)] if $i\in B$ then $\rho\left(  s_{i}\right)  \geq\alpha_{i}+\gamma$.
\end{enumerate}

Then the $\gamma$-fat-\textit{shattering dimension} of $\mathcal{C}$, or
$\fat_{\mathcal{C}}\left(  \gamma\right)  $, is the
maximum $k$ such that some $\left\{  s_{1},\ldots,s_{k}\right\}
\subseteq\setS$ is $\gamma$-fat-shattered by $\mathcal{C}$. \ (If there
is no finite such maximum, then $\fat_{\mathcal{C}}\left(  \gamma\right)  =\infty$.)
\end{definition}

The fat-shattering dimension lower bounds the number of samples needed to learn a probabilistic concept.
\begin{lemma}[Anthony and Bartlett \cite{anthonybartlett}]\label{measurementbound}
Suppose $\mathcal{C}$ is a probabilistic concept class over $\setS$ and set $0<\gamma<\eta<1, \epsilon,\delta \in (0,1)$. Then if $\fat_{\mathcal{C}}(\gamma) \geq d \geq 1$ and $\gamma^2 \geq 4 d 2^{-\sqrt{d/6}}$, any sample size $m_0$ sufficient to learn $\mathcal{C}$ satisfies
$$
m_0(\eta,\gamma,\epsilon,\delta) \geq max\left( \frac1{32 \epsilon} \left( \frac{d}{2 \ln^2 (4d/\gamma^2)}-1\right) ,\frac1{\epsilon} \ln \frac1{\delta} \right)
$$
\end{lemma}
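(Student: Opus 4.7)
The plan is to prove this as a standard sample-complexity lower bound by exhibiting a hard learning instance built directly from a $\gamma$-fat-shattered set, using two separate reductions corresponding to the two terms inside the max.

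First I would handle the easier $\epsilon^{-1}\ln(1/\delta)$ term by a coupon-collector style argument. Fix two indices $i_0, i_1 \in \{1,\ldots,d\}$ and witnesses $\alpha_{i_0}, \alpha_{i_1}$, and pick two fat-shattering concepts $\rho_0, \rho_1 \in \mathcal{C}$ differing on $s_{i_0}$ by $2\gamma > 2\eta$ (so any $\eta$-close hypothesis must commit to a side). Let $\mathcal{D}$ place mass $2\epsilon$ on $s_{i_0}$ and the rest arbitrarily on points where $\rho_0$ and $\rho_1$ agree. If the training set $\mathcal{T}$ never contains $s_{i_0}$ then no learner can pick the correct $\rho_b$ with probability $>1/2$ on $s_{i_0}$, and the error on $\mathcal{D}$ exceeds $\gamma$ on mass $2\epsilon > \epsilon$. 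The probability of missing $s_{i_0}$ in $k$ draws is $(1-2\epsilon)^k \geq e^{-4\epsilon k}$, which forces $k \geq \tfrac{1}{\epsilon}\ln\tfrac{1}{\delta}$ up to constants.

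The combinatorial term $\tfrac{1}{32\epsilon}\bigl(\tfrac{d}{2\ln^2(4d/\gamma^2)} - 1\bigr)$ is the real work, and I would obtain it via a packing argument on the shattered set $\{s_1,\ldots,s_d\}$. For each subset $B \subseteq [d]$, let $\rho_B \in \mathcal{C}$ be the fat-shattering witness. Consider the uniform distribution $\mathcal{D}$ on $\{s_1,\ldots,s_d\}$ and a uniformly random target $B$. Any hypothesis $\sigma_\mathcal{T}$ is $\eta$-close on $\mathcal{T}$, so on every seen $s_i$ the learner effectively reveals whether $i \in B$; but on unseen points the learner cannot do better than guessing, and since $\rho_B(s_i)$ and $\rho_{B\triangle\{i\}}(s_i)$ differ by $2\gamma$, an incorrect guess contributes error $>\gamma$. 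With $k$ samples the expected number of distinct indices observed is at most $k$, so the expected error rate is at least $(d-k)/(2d)$; for this to be $\leq \epsilon$ we would need $k \gtrsim d(1-2\epsilon)$. The logarithmic factor $\ln^2(4d/\gamma^2)$ enters because the actual theorem allows error parameters $\eta<\gamma$ and requires a covering/pseudo-dimension bridge (essentially Alon--Ben-David--Cesa-Bianchi--Haussler): one must cover $\mathcal{C}$ restricted to the shattered set by an $\eta$-net of size $(4d/\gamma^2)^{O(d)}$ and apply a Sauer--Shelah-type bound, which produces the squared logarithm in the denominator.

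The main obstacle is precisely this last step: simply arguing ``coupon collector on the shattered set'' gives $d/\epsilon$ rather than $d/(\epsilon\log^2(d/\gamma^2))$, so to match the stated constant one must either (i) derandomize by restricting to a carefully chosen family of $2^{\Theta(d/\log^2(d/\gamma^2))}$ subsets $B$ whose indicator vectors form a good code, and then invoke a Fano-style information-theoretic lower bound on the mutual information between the sample and $B$, or (ii) reduce to the pseudo-dimension lower bound of Anthony--Bartlett for $\{0,1\}$-valued concepts via the standard discretization. The side condition $\gamma^2 \geq 4d\,2^{-\sqrt{d/6}}$ is exactly what guarantees that this discretization is lossless. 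I would follow route (ii), since it reuses the pseudo-dimension lower bound as a black box and isolates the logarithmic loss into the combinatorial reduction rather than the probabilistic argument.
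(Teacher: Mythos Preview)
The paper does not prove this lemma at all: it is stated as a result of Anthony and Bartlett \cite{anthonybartlett} and used purely as a black box to convert the fat-shattering lower bound of Claim~\ref{fatbound} into a sample-complexity lower bound. There is therefore nothing in the paper to compare your proposal against.

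As for the proposal itself, your two-part strategy (a coupon-collector argument for the $\tfrac{1}{\epsilon}\ln\tfrac{1}{\delta}$ term and a packing/Fano or pseudo-dimension reduction for the combinatorial term) is the standard architecture behind such lower bounds, and your identification of the side condition $\gamma^2 \geq 4d\,2^{-\sqrt{d/6}}$ as the discretization hypothesis is on target. The sketch is plausible but not yet a proof: the crucial step where the $\ln^2(4d/\gamma^2)$ denominator and the constant $1/32$ emerge is only gestured at, and the ``coupon-collector'' version of the first term as written gives the bound only up to unspecified constants. If you actually want to establish the lemma with the stated constants you would have to reproduce the Anthony--Bartlett argument in detail rather than invoke it; for the purposes of this paper, however, citing it is exactly what the authors do.
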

This concludes the results we will need from statistical learning theory.

\subsection{Lower bounds on sample complexity}

Our next step is to show that the existence of random access codes lower bounds the fat-shattering dimension. First we have to carefully define what ``concept'' we will be learning and what constitutes our sample space. 
For the purposes of learning in GNSTs, the sample space is just 
the set of possible measurements, where we allow general measurements by first making some fiducial measurement on the state, and then post-processing the result using some decoding function. So we can define $\mc{S}_{\textrm GNST} \assign \{ (\CM,D) | \CM \in \setE_G, D: \setA^{\times n} \rightarrow \{ 0,1\} \}$.
 For some sample $(\CM,D) \in \mc{S}_{GNST}$, a concept is specified by the state $\rho_x$ in a GNST via the the probability $\rho_x (\CM,D) \assign \sum_{A\in\setA^{\times n}} D(A) p_x(A|\CM)$, where $p_x$ is an $n$-partite state in some GNST. Then the concept class $\mc{C}_{\textrm GNST}$ is the set of concepts specified by all the states in GNST. 
 
Note that a ``sample'' is stronger than a typical notion of measurement. Usually we say that the measurement gives a result with some probability, but given some sample, the concept $\rho$ actually returns the probability of that outcome occurring. This stronger notion of sampling is all we consider here since we are only lower bounding the number of samples needed.

\begin{claim}\label{fatbound}
Let the concept class $\mathcal{C}_{\textrm GNST}$ over $\setS_{\textrm GNST}$ consist of all $\rho_x(\CM,D)= \sum_{A\in\mathcal{A}} D(A) p_x(A|\CM)$, where $p_x$ describes any $n$-partite states in a GNST, over the sample space $ \{ (\CM,D) | \CM \in \setE_G, D: \mathcal{A}^n \rightarrow \{ 0,1\} \}$. For integers $n, N(p,n)$ and $\gamma \in (0,1)$, if there exists an $[N(p,n),n,\half + \gamma]$-RAC then $\fat_{\mathcal{C}_{\textrm GNST}}(\gamma) \geq N$.
\end{claim}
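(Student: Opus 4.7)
The plan is to exhibit an explicit $\gamma$-fat-shattered set of size $N$ drawn directly from the RAC structure. Given an $[N,n,\tfrac12+\gamma]$-RAC, for each bit index $k \in \{1,\ldots,N\}$ we are handed a measurement $\CM_k \in \setE_G$ and a decoding function $D_k : \setA^{\times n} \to \{0,1\}$ such that for every $x \in \{0,1\}^N$ the encoded state $p_x$ satisfies $\Pr[D_k(A) = x_k] \geq \tfrac12 + \gamma$. I would take the candidate shattered set to be $\{(\CM_k, D_k)\}_{k=1}^N \subseteq \setS_{\textrm GNST}$, and the thresholds to be uniformly $\alpha_k = \tfrac12$.

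Next, for an arbitrary $B \subseteq \{1,\ldots,N\}$, I would pick $x \in \{0,1\}^N$ to be the characteristic vector of $B$ (so $x_k = 1$ iff $k \in B$) and use the RAC state $p_x$ to produce a witness concept $\rho_x \in \mathcal{C}_{\textrm GNST}$. By the definition of the concept class,
\[
\rho_x(\CM_k, D_k) \;=\; \sum_{A \in \setA^{\times n}} D_k(A)\, p_x(A\mid \CM_k) \;=\; \Pr[\,D_k(A) = 1\,].
\]
When $k \in B$ we have $x_k = 1$, so the RAC guarantee yields $\rho_x(\CM_k, D_k) \geq \tfrac12 + \gamma = \alpha_k + \gamma$. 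When $k \notin B$ we have $x_k = 0$, so $\Pr[D_k(A) = 0] \geq \tfrac12 + \gamma$, giving $\rho_x(\CM_k, D_k) = 1 - \Pr[D_k(A)=0] \leq \tfrac12 - \gamma = \alpha_k - \gamma$. Both fat-shattering conditions are then satisfied simultaneously, and since $B$ was arbitrary, the set $\{(\CM_k,D_k)\}$ is $\gamma$-fat-shattered.

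The only nontrivial point is bookkeeping: I need to confirm that a decoding function composed with a fiducial measurement still counts as a single sample under the definition of $\setS_{\textrm GNST}$, and that the RAC success probability, stated for decoding to $x_k$, translates cleanly into the one-sided statement $\Pr[D_k(A)=1] \geq \tfrac12 + \gamma$ when $x_k=1$ and $\Pr[D_k(A)=1] \leq \tfrac12 - \gamma$ when $x_k=0$. Both are essentially by definition once the sample space is set up to pair measurements with $\{0,1\}$-valued post-processing. There is no real obstacle beyond this matching of definitions, since the RAC provides a ready-made state for every one of the $2^N$ possible subsets $B$, which is exactly what fat-shattering demands. The conclusion $\fat_{\mathcal{C}_{\textrm GNST}}(\gamma) \geq N$ then follows immediately from Definition \ref{fat}.
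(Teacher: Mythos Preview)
Your proof is correct and follows essentially the same approach as the paper: exhibit the $N$ measurement/decoder pairs from the RAC as the shattered set with thresholds $\alpha_k=\tfrac12$, and for each subset $B$ use the RAC encoding of the characteristic string of $B$ as the witnessing concept. The paper's argument is slightly terser, but the content is identical.
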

\begin{proof}
By the RAC definition, there exist a set of measurements $\{(\CM,D),\ldots, (\CM^{(N)},D^{(N)})\}$ and states specified by (the concepts) $\rho_x$ for $x\in\{0,1\}^N$ so that
\begin{enumerate}
\item[(i)] if $x_i = 0$ then $\rho_x(\CM^{(i)},D^{(i)}) \leq \half - \gamma$
\item[(ii)] if $x_i = 1$ then $\rho_x(\CM^{(i)},D^{(i)}) \geq \half + \gamma$
\end{enumerate}
Therefore, this set of samples is $\gamma$ fat-shattered by $\mathcal{C}_{\textrm GNST}$. Since $\fat_{\mathcal{C}_{\textrm GNST}}$ is the size of the largest sample set shattered, $\fat_{\mathcal{C}_{\textrm GNST}} \geq N(p,n)$.
\end{proof}

Combining Claims~\ref{pRAC-learn} with \ref{fatbound} and \ref{measurementbound}, we get the following result.
\begin{corollary}
For $\hat{n}$-partite concepts in $\mathcal{C}_{\textrm p-GNST}$ and error parameters $\varepsilon
,\eta,\gamma,\delta>0$\ with\ $\gamma>\eta$, if $\hat{n} \geq 2^{2/p} \ln(4/(1-\gamma)^2)$ and
$$k<max\left( \frac1{32 \epsilon} \left( \frac{3^{{n}(\hat{n},p,\gamma)}}{2 \ln^2 (4\cdot3^{{n}(\hat{n},p,\gamma)}/\gamma^2)}-1\right) ,\frac1{\epsilon} \ln \frac1{\delta} \right)$$ 
for ${{n}}(\hat{n},p,\gamma) = \lfloor \left(  \frac{\hat{n}~ 2^{-2/p}}{\ln(4/(1-\gamma)^2)}  \right)^{\frac1{2/p+1}} \rfloor$, then $k$ is not a sufficient sample size to learn states in $\mathcal{C}_{\textrm p-GNST}$.
\end{corollary}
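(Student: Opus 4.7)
The plan is to chain together the three preceding results in a purely mechanical way: Claim~\ref{pRAC-learn} supplies a superstrong random access code in $p$-GNST, Claim~\ref{fatbound} converts any such RAC into a lower bound on the fat-shattering dimension of the associated concept class, and Lemma~\ref{measurementbound} converts a fat-shattering lower bound into a sample-complexity lower bound. There is no new idea required; the only work is ensuring that the side-conditions of each ingredient are met and that the parameters line up.

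First, I would invoke Claim~\ref{pRAC-learn} with the given $\hat{n}$, $p$, and $\gamma$. The hypothesis $\hat{n} \geq 2^{2/p}\ln(4/(1-\gamma)^2)$ of the corollary mirrors (up to the standard $1/2+\gamma$ versus $1-\gamma$ convention for the success probability) the hypothesis of the claim and ensures $n(\hat{n},p,\gamma)\geq 1$, so the claim delivers a $[3^{n(\hat{n},p,\gamma)},\hat{n},\tfrac12+\gamma]$-RAC realised by states in $p$-GNST. Since $p$-GNST states are in particular GNST states, the corresponding concepts $\rho_x(\CM,D) = \sum_{A} D(A)\, p_x(A|\CM)$ lie in $\mathcal{C}_{p\text{-GNST}} \subseteq \mathcal{C}_{\textrm{GNST}}$.

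Next, feed this RAC into Claim~\ref{fatbound}. The proof of that claim only used the defining property of a $[N,n,\tfrac12+\gamma]$-RAC to exhibit a $\gamma$-fat-shattered set of size $N$ with shattering offsets $\alpha_i=\tfrac12$; applied inside the smaller class $\mathcal{C}_{p\text{-GNST}}$ this gives
\[
\fat_{\mathcal{C}_{p\text{-GNST}}}(\gamma)\;\geq\;3^{n(\hat{n},p,\gamma)}.
\]
Finally, plug $d = 3^{n(\hat{n},p,\gamma)}$ into Lemma~\ref{measurementbound}. The conclusion of the lemma, with this $d$ substituted, is exactly the displayed lower bound on $k$ in the statement of the corollary, so any $k$ strictly below that value cannot be a sufficient sample size for learning $\mathcal{C}_{p\text{-GNST}}$.

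The only non-trivial obligation is to verify the technical hypothesis $\gamma^2 \geq 4d\, 2^{-\sqrt{d/6}}$ of Lemma~\ref{measurementbound}. Because $d = 3^{n(\hat{n},p,\gamma)}$ grows at least doubly-exponentially relative to the $2^{-\sqrt{d/6}}$ factor, the right-hand side is eventually much smaller than any fixed $\gamma^2$; the condition $\hat{n} \geq 2^{2/p}\ln(4/(1-\gamma)^2)$ in the corollary is there precisely to push $n(\hat{n},p,\gamma)$, and hence $d$, into this regime. Checking this inequality is the one routine calculation needed, and it is where implicit constraints on $\gamma$ (bounded away from $0$) are absorbed.
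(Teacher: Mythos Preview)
Your proposal is correct and follows exactly the paper's approach: the paper simply states that the corollary is obtained ``Combining Claims~\ref{pRAC-learn} with~\ref{fatbound} and~\ref{measurementbound}'' and gives no further argument. Your write-up is in fact more detailed than the paper's, which does not pause to verify the side-condition $\gamma^2 \geq 4d\,2^{-\sqrt{d/6}}$ of Lemma~\ref{measurementbound} at all.
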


That is, we need $\bigoh(3^{\hat{n}^{\frac1{2/p+1}}} /{\hat{n}}^{\frac2{2/p+1}} )$ samples to learn an $\hat{n}$-partite state in $p$-GNST to great accuracy. For $p=2$ we have an uncertainty relation analogous to quantum mechanics that rules out super-quantum violations of the CHSH bound. Nevertheless it still takes $\bigoh(3^{\sqrt{\hat{n}}}/{\hat{n}})$ samples to learn these states, as compared to $\bigoh(n)$ in the quantum case. 

\newcommand{\lb}{\raisebox{-.9ex}}
\begin{table}[htdp]\label{table:results}
\begin{center}\begin{tabular}{|c|c|c|c|c|c|}\hline  & p-bin & p-GNST/p-box & p-nonlocal & Quantum & Classical \\
\hline\hline Non-signaling & yes & yes & yes & yes & yes \\
\hline Satisfies p-uncertainty & yes & yes & yes & p=2 & n/a \\\hline Simultaneous  & \lb{no} & \lb{local} & \lb{commuting} & \lb{commuting} & \lb{all} \\[-1ex] measurements&&&&&\\\hline \hline  CHSH violation& \lb{$\half+\frac1{2^{1/p+1}}$} & \lb{$\half+\frac1{2^{1/p+1}}$} & \lb{$\half+\frac1{2^{1/p+1}}$} & \lb{$\half+\frac1{2^{1/2+1}}$} & \lb{$\frac34$} \\[2ex] \hline 
RAC  bits to&\lb{$\bigoh(\polylog(N))$}&\lb{$\bigoh(\polylog(N))$}&\lb{$?$}&\lb{$\Omega(N)$}&\lb{$\Omega(N)$}\\ \raisebox{0.5ex}{encode N bits} &&&&&\\\hline 
PIR from N bits&$\bigoh(\polylog(N))$&$\bigoh(\polylog(N))$&?&$\Omega(N)$&$\Omega(N)$\\\hline``Learning'' states&hard&hard&?&easy&easy\\\hline \end{tabular} \caption{Summary of properties and results for various theories.}
\end{center}
\label{defaulttable}
\end{table}

\section{Consistency of measurements}

\section{Conclusion and open questions}

We have shown that relaxing uncertainty relations can lead to superstrong non-local correlations. This is quite intuitive when considering Tsirelson's bound as
a consequence of such an uncertainty relation in the quantum setting. We then constructed a range of theories inspired by such relaxations, and investigated
their power with respect to a number of information processing problems. In particular, we obtained superstrong random access encodings and savings for communication
complexity. At the same time, however, it turned out to become harder to learn a state in such a theory. We then discussed what makes such superstrong encodings
possible in our $p$-theories, but also in GNST. We identified a number of simple constraints that prevent us from constructing a similar encoding in the
quantum setting. Our work may indicate that using ``box-world'' to understand any other problems within quantum information beyond non-local correlations
may be difficult, as ``box-world'' differs from the quantum setting with respect to such constraints, at least when drawing a one-to-one analogy from a gbit to a qubit as in GNST~\cite{barrett:nonlocal}. It is important to note that these constraints did not
prevent us from observing superstrong non-local correlations, but merely forbid our encodings in section~\ref{sec:random}. 
If one would like to use ``box-world'' to understand other aspects one could either impose such consistency constraints, or look for a different approach to defining
such theories. GNST was defined by first specifying states and then allowing all operations that take valid states to valid states. If one would have specified
the theory in terms of allowed transformations, instead of states, such encodings could also have been ruled out. For example, in the quantum setting one
can transform operators $X \otimes X$, $Z \otimes Z$ and $XZ \otimes XZ$ into a bipartite form via a unitary operation. When looking at a density matrix
expressed in terms of strings of Pauli matrices, its coefficients (which directly determine the moments for measurements of strings of Paulis) must obey
similar constraints to the coefficients belonging to bipartite operators of the form $\id \otimes X, X\otimes \id, X \otimes X$ for example. 

Finally, it is clear that both the uncertainty relation and the consistency constraints are obeyed in the quantum setting, since we demand that for 
any $\rho$ we have $\Tr(\rho) = 1$ and $\rho \geq 0$ to be a valid quantum state. Not surprisingly, both forms of constraints are thus necessary 
(but in higher dimensions not always sufficient) conditions for $\rho \geq 0$. Such characterizations are not easy for $d > 2$~\cite{kimura, bloch2, dietz:bloch,wehner:bloch}, 
and it remains an interesting open problem to find an intuitive interpretation for such conditions in higher dimensions, and their consequence for
information processing tasks.

\section{Acknowledgments}

We are indebted to Wim van Dam for useful discussions. This work was supported by NSF grant number PHY-04056720.

\end{document}